\documentclass[11pt,a4paper]{article}

\usepackage[utf8]{inputenc}
\usepackage[T1]{fontenc}
\usepackage{amsmath, amssymb, amsfonts, amsthm}
\usepackage{geometry}
\usepackage{graphicx}
\graphicspath{{figures/}}
\usepackage{booktabs}
\usepackage{hyperref}

\newtheorem{theorem}{Theorem}
\newtheorem{proposition}{Proposition}
\newtheorem{corollary}{Corollary}
\theoremstyle{remark}
\newtheorem{remark}{Remark}

\newcommand{\clo}[1]{{#1}^{(2*)}}   

\title{Symmetry of Spin Systems as Automorphisms of Undirected Weighted Graphs: Realizability Criterion and Complete Taxonomy up to 14 Spins}

\author{Dmitry A. Cheshkov, Dmitry O. Sinitsyn, Artemiy I. Nichugovskiy}

\date{July 24, 2026}

\begin{document}

\maketitle

\begin{abstract}
Exact simulation of high-resolution NMR spectra requires block diagonalization of the spin Hamiltonian, whose dimension grows exponentially with the number of spins $N$; symmetry is the principal tool for taming this growth, yet which permutation groups can occur as the full symmetry group of a scalar-coupled spin system has lacked an exhaustive treatment. Formulating the spin system as an undirected edge-weighted complete graph, we prove an exact realizability criterion: a subgroup of $S_N$ is realizable if and only if it coincides with its symmetrized (undirected) Wielandt 2-closure. In particular, purely rotational symmetry of a single spin ring is impossible, yet chiral spin systems do exist as multi-orbit twisted stacks, and we determine the minimal spin count $\mu^{*}(C_n)$ for every cyclic group, including the counter-intuitive realizations $C_8$ and $C_9$ at $N = 12$. A sequential symmetrization algorithm, completed by an orbit-partition decomposition, yields a provably exhaustive enumeration of all realizable symmetry types up to $N = 14$: the apparently new sequence $a(N) = 1, 1, 3, 8, 11, 27, 36, 90, 131, 282, 394, 948, 1316, 2866$ with the tower law $a(N) = a(N-1) + f(N)$ --- a catalogue of 6112 entries in all, organized by canonical identifiers and a structural grammar extending the Pople nomenclature. Finally, we present a hierarchical methodology for exact block diagonalization without physical approximations: factorization by the conserved total spin projection, Schur--Weyl contraction of magnetically equivalent composites, orbit-weight deduplication of the spin configurations, and isotypic projection over the representations of the factor group, with a uniform treatment of non-abelian groups and complex characters.
\end{abstract}

\section{Introduction}

The exact simulation of high-resolution nuclear magnetic resonance (NMR) spectra remains a cornerstone of modern structural analysis in chemistry and structural biology. The core computational challenge in total lineshape analysis \cite{cheshkov2018} is the exact block diagonalization of the isotropic spin Hamiltonian, which operates in a Hilbert space of dimension $2^N$ for a system of $N$ spin-$\frac{1}{2}$ nuclei. As $N$ increases, the exponential growth of the Hamiltonian matrix makes numerical diagonalization a computationally prohibitive bottleneck.

Exploiting the intrinsic symmetry of the spin system is the most powerful physical approach to mitigate this ``curse of dimensionality.'' Historically, symmetry in NMR has been largely derived from the geometric point-group symmetry of the underlying molecule. However, the isotropic spin Hamiltonian is fundamentally insensitive to the 3D atomic coordinates; it is entirely defined by two sets of parameters: the chemical shifts ($\omega_i$) and the scalar spin--spin coupling constants ($J_{ij}$). Consequently, the proper mathematical object representing an NMR spin system is not a 3D molecular structure, but an \textit{undirected weighted complete graph}, where the vertices are weighted by chemical shifts and the edges are weighted by scalar couplings. The true symmetry of the physical system is precisely the automorphism group of this weighted graph. The computer generation of automorphism groups of edge-weighted graphs --- including NMR coupling graphs, for which the group is a subgroup of the Young subgroup $S_{n_1} \times \cdots \times S_{n_m}$ of the magnetically equivalent vertex classes --- was developed by Balasubramanian \cite{balasubramanian1994}; that line of work computes $\mathrm{Aut}(G)$ for a \emph{given} weighted graph, whereas the structural question we address is the inverse one: which groups can arise at all.

While the classical Pople nomenclature (e.g., $A_2X_2$, $AA^\prime BB^\prime$) \cite{pople1959} has been universally adopted for describing small symmetric systems, it lacks a rigorous algebraic foundation necessary to unambiguously classify complex, highly symmetric topologies. More fundamentally, a critical theoretical question has remained unanswered: out of all possible subgroups of the symmetric permutation group $S_N$, which subgroups can actually be realized as the exact automorphism group of a scalar-coupled spin system? A classical theorem of Frucht \cite{frucht1939} states that every abstract finite group is the automorphism group of \emph{some} graph; the question here is sharper --- which groups are realizable \emph{as permutation groups on the $N$ spins themselves}. Previous empirical observations lacked an exhaustive mathematical treatment, leaving the classification of NMR spin topologies incomplete.

In this paper, we bridge permutation group theory and quantum magnetic resonance to provide a rigorous and comprehensive framework for spin system symmetries. Adapting Wielandt's theory of 2-closures \cite{wielandt1969} to the undirected (scalar-coupled) setting, we formulate an exact realizability criterion: the realizable groups are exactly those closed under the \emph{symmetrized} 2-closure (Section~2). To categorize the resulting catalogue, we propose a formalized grammar extending the Pople notation alongside unambiguous canonical identifiers (Section~3). Based on this Galois correspondence \cite{higman1975}, we develop the enumeration methodology --- from the naive Bell-number baseline through a sequential-symmetrization search to an orbit-partition decomposition theorem with its $N-1$ tower --- that provides a provably exhaustive enumeration of all realizable spin topologies up to $N = 14$ (Section~4), yielding a complete taxonomy of $\sum_{N=3}^{14} a(N) = 6112$ catalogue entries. Section~5 analyses the results: the counting sequence and the machine taxonomy, followed by the structurally interesting cases --- the impossibility of purely rotational symmetry on a single spin ring versus the existence of multi-orbit chiral realizations with their minimal spin counts $\mu^{*}(C_n)$, the revival of the alternating group $A_4$, and the odd-order and simple-group entries built on the Paley tournament and the Fano plane.

The catalogue is complemented by a symmetry-exact three-dimensional visualization: diagonalizing the class-weighted topology matrix of an entry yields embeddings in which every graph automorphism acts as a rigid motion (Section~6). Finally, we translate this structural taxonomy into a highly efficient, general-purpose methodology for exact block diagonalization of the spin Hamiltonian. By hierarchically applying Schur--Weyl duality \cite{weyl1939} to magnetically equivalent composites (twin vertices) \cite{gallai1967, banwell1963} and isotypic projection to the graph's factor group, we present an algorithm capable of handling any realizable topology --- including non-abelian groups and complex characters --- without approximations (Section~7).

\section{Theoretical Framework and the Realizability Criterion}

\subsection{The Spin System as an Edge-Weighted Complete Graph}
Let $\mathcal{S}$ be a quantum-mechanical system of $N$ interacting spin-$\frac{1}{2}$ nuclei. Under the isotropic liquid-state approximation, the system is governed by the scalar-coupled spin Hamiltonian (in units of $\hbar = 1$):
\begin{equation}
    H = \sum_{i \in V} \omega_i I_{z,i} + \sum_{\{i,j\} \in E} J_{ij} (\mathbf{I}_i \cdot \mathbf{I}_j)
\end{equation}
where $V = \{1, 2, \dots, N\}$ represents the set of spin labels, $E = \{\{i,j\} : i,j \in V, i \neq j\}$ is the set of all unordered pairs, $\omega_i$ denotes the chemical shift of the $i$-th spin, and $J_{ij}$ is the isotropic scalar coupling constant between spins $i$ and $j$.

We map this physical system to an undirected, vertex-weighted, edge-weighted complete graph $G = (V, E, \omega, J)$. A permutation of spin labels $\sigma \in S_N$ induces a unitary transformation $U_\sigma$ in the $2^N$-dimensional Hilbert space $\mathcal{H} = \bigotimes_{i=1}^N \mathbb{C}^2$. The operator $U_\sigma$ commutes with the Hamiltonian, $[H, U_\sigma] = 0$, if and only if the permutation preserves the parameter landscape:
\begin{equation}
    \omega_{\sigma(i)} = \omega_i \quad \forall i \in V, \quad \text{and} \quad J_{\sigma(i)\sigma(j)} = J_{ij} \quad \forall \{i,j\} \in E
\end{equation}
Algebraically, this condition is equivalent to stating that $\sigma$ belongs to the automorphism group of the weighted complete graph, $\sigma \in \mathrm{Aut}(G)$. Because the specific numerical values of $\omega_i$ and $J_{ij}$ are prone to experimental variation, the algebraic symmetry of the system is fundamentally a discrete property determined by the partitions of $V$ and $E$ into equivalence classes under the equality of weights.\footnote{For $N \ge 3$ the vertex classes are redundant in the following sense: any partition of $V$ can be induced by edge classes alone, so the enumeration of Section~4 may operate on edge colourings only. The single exception is $N = 2$, where the unique edge cannot distinguish the two vertices; the value $a(2) = 1$ below follows the edge-colouring convention.}

\subsection{Symmetrized 2-Closure and the Realizability Criterion}
A central question in the structural taxonomy of spin systems is identifying which subgroups $H \le S_N$ can occur as the full automorphism group $\mathrm{Aut}(G)$ for some choice of weight functions $\omega$ and $J$. The appropriate closure concept goes back to Wielandt \cite{wielandt1969}, with one essential modification dictated by the physics.

Let $H$ be a permutation group acting on the set $V$. Wielandt's classical construction considers the induced action on \emph{ordered} pairs, $\sigma(i,j) = (\sigma(i), \sigma(j))$; the orbits of this action are the \textit{2-orbits} (orbital relations) of $H$, and the 2-closure $H^{(2)}$ is the largest subgroup of $S_N$ preserving every 2-orbit. Scalar couplings, however, are \emph{symmetric}: $J_{ij} = J_{ji}$ carries no orientation, so the invariant relations available to a spin system live on \emph{unordered} pairs. We therefore work with the induced action of $H$ on $V$ itself (vertex orbits) and on the edge set $E$ of unordered pairs, $\sigma\{i,j\} = \{\sigma(i), \sigma(j)\}$, and define the \textbf{symmetrized 2-closure} ($2^{*}$-closure; the asterisk marks the unordered, symmetrized variant)
\begin{equation}
    \clo{H} \;=\; \bigl\{ \sigma \in S_N :\ \sigma \text{ preserves every vertex orbit and every edge orbit of } H \bigr\}.
\end{equation}
A permutation group is called \emph{$2^{*}$-closed} if $H = \clo{H}$. Clearly $H \le H^{(2)} \le \clo{H}$: symmetrization merges the mutually transposed 2-orbits, so the undirected closure can only be larger.

\begin{theorem}[Realizability criterion]\label{thm:realizability}
A permutation subgroup $H \le S_N$ can be realized as the full symmetry group of a scalar-coupled spin system if and only if $H$ is $2^{*}$-closed, i.e. $H = \clo{H}$.
\end{theorem}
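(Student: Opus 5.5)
The plan is to prove the two directions separately, relying only on the definition of $\clo{H}$ and the characterization of $\mathrm{Aut}(G)$ given by the commutation condition (2).

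\emph{Necessity.} Suppose $H = \mathrm{Aut}(G)$ for some weights $\omega, J$. The inclusion $H \le \clo{H}$ holds by definition, since $H$ preserves its own orbits. For the converse inclusion, note that $\omega$ is constant on each vertex orbit of $H$, because $\omega_{\sigma(i)}=\omega_i$ for every $\sigma\in H$. Likewise $J$ is constant on each edge orbit of $H$ acting on unordered pairs. Now take $\tau \in \clo{H}$. It maps each vertex $i$ into the $H$-orbit of $i$, so $\omega_{\tau(i)}=\omega_i$. It maps each unordered pair $\{i,j\}$ into its $H$-edge-orbit, so $J_{\tau(i)\tau(j)}=J_{ij}$. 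Hence $\tau$ satisfies (2), so $\tau\in\mathrm{Aut}(G)=H$, and therefore $\clo{H}\le H$.

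\emph{Sufficiency.} Suppose $H=\clo{H}$. We build a spin system whose parameters are as generic as the orbit structure of $H$ allows. Assign pairwise distinct real numbers $\omega$ to the distinct vertex orbits of $H$. Assign pairwise distinct real numbers $J$ to the distinct edge orbits, making the edge and vertex values disjoint as well; this extra disjointness is harmless. Every $\sigma\in H$ preserves all of these orbits, so $H\le\mathrm{Aut}(G)$. Conversely, let $\sigma\in\mathrm{Aut}(G)$. From $\omega_{\sigma(i)}=\omega_i$ and the injectivity of the orbit-to-value assignment, $\sigma(i)$ lies in the same vertex orbit as $i$. In the same way, $J_{\sigma(i)\sigma(j)}=J_{ij}$ forces $\{\sigma(i),\sigma(j)\}$ into the same edge orbit as $\{i,j\}$. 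Thus $\sigma$ preserves every vertex orbit and every edge orbit, so $\sigma\in\clo{H}=H$. This gives $\mathrm{Aut}(G)=H$.

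\emph{Main obstacle.} The only delicate point is the precise meaning of ``preserves every orbit''. For a given $\sigma$, it must be read as ``maps each element into its own orbit''. This pointwise reading is needed for the sufficiency step to match the definition. The setwise reading (each orbit mapped onto itself) is equivalent here, because $\sigma$ is a bijection: mapping each orbit into itself already forces it onto itself. I would state this equivalence explicitly.

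A second point is that the statement concerns the full Hamiltonian symmetry $[H,U_\sigma]=0$, and I take its equivalence with (2) from the paper, as stated before the theorem. Should one want to avoid vertex weights (cf.\ the footnote for $N\ge3$), an extra encoding argument would be needed. The theorem as stated allows $\omega$, so I would not pursue that.
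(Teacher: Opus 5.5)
Your proof is correct and follows the paper's route. Necessity holds because $\omega$ and $J$ are constant on the vertex and edge orbits of $H=\mathrm{Aut}(G)$, so every element of $\clo{H}$ preserves the weights. Sufficiency comes from the orbit colouring with pairwise distinct values, whose automorphism group is exactly $\clo{H}=H$. Your added remark that the pointwise and setwise readings of ``preserves every orbit'' coincide, because $\sigma$ acts bijectively on the finite sets $V$ and $E$, is a harmless clarification that the paper leaves implicit.
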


\begin{proof}
($\Rightarrow$) Let $H = \mathrm{Aut}(G)$ for a weighted graph $G = (V, E, \omega, J)$. The level sets of $\omega$ partition $V$ and the level sets of $J$ partition $E$; since every element of $H$ preserves the weights, each level set is a union of $H$-orbits (of vertices and of unordered pairs, respectively). Any $\sigma \in \clo{H}$ preserves every vertex orbit and every edge orbit of $H$, hence preserves every level set of $\omega$ and $J$, hence $\sigma \in \mathrm{Aut}(G) = H$. Combined with the trivial inclusion $H \le \clo{H}$, we obtain $H = \clo{H}$.

($\Leftarrow$) Conversely, let $H$ be $2^{*}$-closed. Construct the \emph{orbit colouring} of $H$: assign one chemical shift value per vertex orbit and one coupling value per edge orbit, all values pairwise distinct. For the resulting graph $G$, a permutation is an automorphism precisely when it preserves every vertex and edge orbit of $H$; by construction this set is exactly $\clo{H} = H$. Thus $\mathrm{Aut}(G) = H$.
\end{proof}

The correspondence $H \mapsto$ (orbit colouring) and (colouring) $\mapsto \mathrm{Aut}$ is a Galois correspondence between the subgroup lattice of $S_N$ and the lattice of edge colourings of $K_N$ \cite{higman1975}; the realizable groups are exactly its closed elements.

\begin{remark}[Directed contrast]\label{rem:directed}
The distinction between the two closures is not pedantic. The cyclic group $C_N$ \emph{is} 2-closed in Wielandt's ordered sense: its 2-orbits distinguish the ``directions'' $+d$ and $-d$ along the ring, and a generic circulant \emph{digraph} has automorphism group exactly $C_N$. It is the symmetrization --- the physical fact that $J_{ij} = J_{ji}$ --- that merges the classes $+d$ and $-d$ and forces the reflection into the closure (Proposition~\ref{prop:ring}). Realizing the ``missing'' groups such as $C_N$ on a single ring would require an antisymmetric pairwise interaction, which the isotropic scalar coupling does not provide.
\end{remark}

\section{Nomenclature and Topology Classification}

\subsection{The Limitation of Classical Notation}
Historically, NMR spin systems have been described using the empirical Pople nomenclature \cite{pople1959}, which assigns letters from the Latin alphabet based on chemical shift differences (e.g., $A, M, X$) and uses primes or subscripts to denote magnetic equivalence and symmetry (e.g., $A_2X_2$, $AA^\prime BB^\prime$). While this notation is highly intuitive for simple molecules, it completely lacks the formal algebraic grammar required to describe the arbitrary finite groups generated by complex weighted graphs. As $N$ increases, relying on ad hoc prime notation becomes fundamentally ambiguous. To address this, we propose a rigorous, machine-readable, three-layer classification scheme that unambiguously identifies and describes any realizable spin topology.

\subsection{Layer 0: The Canonical Identifier}
To ensure stable and unambiguous identification across literature and software implementations, we introduce a canonical identifier for each unique symmetry class, analogous to a CAS Registry Number for chemical compounds. The identifier is defined by the syntax:
\begin{equation}
    N/\mathrm{o}\langle|\mathrm{Aut}|\rangle.\langle k \rangle
\end{equation}
where $N$ is the number of spins, $|\mathrm{Aut}|$ is the order of the full automorphism group of the undirected weighted graph, and $k$ is a deterministic index within the given order, anchored to the published catalogue files. For example, the identifier \texttt{8/o72.3} uniquely denotes an 8-spin system with a group order of 72 (structurally $C_2\langle A_3 A_3^\prime, X X^\prime \rangle$ in the grammar below). Using the sequential symmetrization algorithm described in Section 4 (and, for $N = 14$, the orbit-partition decomposition of Section~\ref{sec:decomposition}), we have computed the complete catalogue for $N$ ranging from 3 to 14 --- $a(N)$ realizable classes at each degree, $\sum_{N=3}^{14} a(N) = 6112$ entries in all --- each assigned a permanent Layer 0 identifier.

\subsection{Layer 1: Magnetic Composites and Factor Graphs}
The physical structure of any symmetric spin graph can be recursively decomposed into fundamental algebraic building blocks. The first step in this decomposition is the identification of ``twin vertices'' --- sets of spins that share identical chemical shifts and identical coupling constants to all other spins in the system. In NMR terminology, these are strictly magnetically equivalent nuclei \cite{corio1966}; in graph theory, twin classes are the modules of the modular decomposition \cite{gallai1967}.

Let the graph be partitioned into magnetic composite classes of sizes $n_g$. The full automorphism group of the system factors as
\begin{equation}\label{eq:factor}
    |\mathrm{Aut}| = \Bigl( \prod_g n_g! \Bigr) \cdot |\Gamma_q|
\end{equation}
where $\Gamma_q$ is the residual factor group acting on the reduced factor graph (each twin class contracted into a single super-vertex). This Layer 1 fingerprint --- the composite sizes $n_g$, the order of the factor group $|\Gamma_q|$, and the vertex orbits --- provides the exact mathematical foundation required for the hierarchical block diagonalization of the Hamiltonian via Schur--Weyl duality. Equation~\eqref{eq:factor} was verified mechanically for all 6112 catalogue entries.

\subsection{Layer 2 and the Extended Structural Grammar}
To bridge the gap between abstract group theory and chemical intuition, we introduce a formal grammar that extends the classical Pople notation to accommodate complex factor groups. This grammar utilizes recursive bracket structures to denote group operations over specific subsystems:
\begin{itemize}
    \item \textbf{Subscripts} denote magnetic composites (e.g., $A_3$ represents a symmetric $S_3$ group of three magnetically equivalent spins).
    \item \textbf{Primes} ($A, A^\prime, A^{\prime\prime}$) denote distinct orbits that share the same chemical shift but have different coupling topologies.
    \item \textbf{Group brackets} such as $C_n\langle \dots \rangle$ or $D_n\langle \dots \rangle$ explicitly indicate the factor group $\Gamma_q$ acting on the enclosed composites; dedicated constructors ($\mathrm{tw}_n$, $\mathrm{st}_n^{0}$, $\mathrm{st}_n^{45}$, $\mathrm{cube}$, \dots) name the recurring twisted-stack and polyhedral motifs.
\end{itemize}
For instance, a system where a dihedral $D_2$ symmetry acts on four identical but distinct composites is unambiguously written as $D_2\langle A_4^\prime \rangle \ast X_4$, directly mapping the algebraic operations to the physical spin topology. This extended grammar is complete: every realizable topology admits a human-readable, yet mathematically strict, structural formula; the full renaming table for the classical 8-spin catalogue (90 entries) is provided in the accompanying repository.

\section{Computational Methodology:\texorpdfstring{\\}{ }Enumerating the Realizable Symmetries}

\subsection{The Naive Baseline: Bell-Number Growth}
Up to renaming of the classes, an edge colouring of $K_N$ is a set partition of its $E = N(N-1)/2$ edges, so the raw search space of the brute-force approach --- enumerate every colouring, compute its automorphism group, deduplicate --- is the Bell number $B_E$ (OEIS A000110). The colourings are generated without repetition as restricted growth strings: edge $k$ may open a new class only one step above the maximum class used by the preceding edges. Mechanically, this is a bank of coupled gears with growing tooth counts, one gear per edge --- an Enigma-like odometer in which the rightmost wheel spins fastest, carries propagate leftward, and each wheel may climb only one step above the maximum reached by the wheels to its left before resetting; one full revolution of the machine visits each of the $B_E$ colourings exactly once. The growth is super-exponential: already for $N = 6$ one must sweep $B_{15} = 1\,382\,958\,545$ colourings to surface $a(6) = 27$ symmetry classes, and $N = 7$ demands $B_{21} \approx 4.7 \cdot 10^{14}$. Brute force is therefore dead on arrival beyond $N = 6$; the constructive search below inverts the strategy --- instead of enumerating all colourings and testing them for symmetry, it grows only the symmetric ones.

\subsection{Constructive Breadth-First Search in the Coloring Space}
To generate an exhaustive catalogue of all physically realizable spin system symmetries (i.e., all $2^{*}$-closed subgroups of $S_N$ up to conjugacy), we developed a sequential symmetrization algorithm. Traditional subgroup enumeration methods are highly redundant for this task, as the vast majority of permutation groups are not $2^{*}$-closed (cf.\ Table~\ref{tab:sequence}). Our approach solves the inverse problem: we enumerate unique orbit colourings (representing the scalar coupling matrices $J$), whose automorphism groups are by construction exactly the realizable groups.

The algorithm employs a constructive breadth-first search (BFS) over the space of canonical graph colourings. The initial state is a completely asymmetric complete graph $K_N$ (a discrete colouring where every edge possesses a unique weight, corresponding to the trivial group). In each round, every colouring of the current frontier is symmetrized by every permutation $\sigma \in S_N$: using a disjoint-set (union--find) structure, the edge classes are merged along the orbits of $\sigma$, producing the coarsest colouring for which $\sigma$ is an automorphism. Each resulting colouring is reduced to a canonical form --- the lexicographically minimal relabelling over all vertex orderings --- so that each symmetry class is stored exactly once. The canonical form, together with the automorphism group $\mathrm{Aut}$ (its order, generators and vertex orbits), is computed by a nauty-style individualization--refinement procedure \cite{mckay2014}: colour refinement to an equitable vertex partition, then a refinement-guided ordering search in which the automorphisms discovered so far prune the search tree by orbit, so that highly symmetric colourings --- the costliest case for a naive ordering scan --- are the cheapest here; $|\mathrm{Aut}|$ follows from the generators by Schreier--Sims without enumerating group elements. Two optimizations are exact: in the first round it suffices to take one representative $\sigma$ per cycle type (conjugate generators yield conjugate colourings), and symmetrizing by $\sigma$ is equivalent to symmetrizing by the whole cyclic group $\langle \sigma \rangle$ at once. In lattice-theoretic terms, the algorithm computes joins: round $r$ produces the closures of all subgroups generated by $r$ ``cyclic-closure'' steps, and the set of realizable groups is the join-closure of the cyclic ones.

\subsection{Mathematical Proof of Completeness}
A fundamental advantage of this constructive approach is its built-in stopping criterion, which guarantees the completeness of the generated catalogue. By the McIver--Neumann bound \cite{mciver1987} (cf.\ also Jerrum's $N-1$ bound \cite{jerrum1986}), any subgroup of $S_N$ can be generated by at most $\max(2, \lfloor N/2 \rfloor)$ elements. Consequently, every realizable group is reached after at most $\lfloor N/2 \rfloor$ productive symmetrization rounds, and the first empty round certifies exhaustion.

The bound is tight in practice. For $N = 8$ the catalogue is completed in exactly 4 productive rounds; the unique round-4 entry is the elementary abelian group $(C_2)^4$ in its \emph{regular-type} action on two tetrads (two commuting Klein four-groups), which genuinely requires four generators. Notably, the ``Young-type'' $(C_2)^4$ generated by four disjoint transpositions is reached already in round 3: its rank-3 even-weight subgroup has the same edge orbits, so the fourth generator is absorbed by the closure. For $N = 12$ the enumeration terminates after 5 productive rounds, within the guaranteed bound of 6.

\subsection{High-Performance Implementation}
Because the space of raw colourings grows rapidly, reaching $N = 12$ required an optimized implementation in C++. With the individualization--refinement canonicalizer above, the dominant cost is no longer canonization but the coarsening and deduplication of the raw colourings produced by the sweep: within each round the frontier work is parallelized with OpenMP (dynamic scheduling), raw colourings are hashed and deduplicated before canonization, and duplicate canonical forms are discarded through global hash tables. The complete enumeration up to $N = 12$ was executed on a dual-socket Intel Xeon Gold 6542Y node, with the $N = 12$ level completing in under two hours. For $N = 13$ a distributed MPI+OpenMP version of the enumerator was developed, which shards the deduplication tables across ranks by colouring hash, exchanges raw colourings in size-capped batches, and checkpoints after every round. The $N = 13$ level (185 million raw colourings exchanged in round 2 alone) completed in 23.4 hours on two such nodes, giving $a(13) = 1316$ in five productive rounds, within the guaranteed bound of six.

\subsection{Orbit-Partition Decomposition and Cross-Validation}\label{sec:decomposition}
The breadth-first search is bounded by the size of the raw-colouring space it must sift, and the direct alternative --- enumerate every subgroup class of $S_N$ and retain the $2^{*}$-closed ones --- is bounded by the cost of the subgroup lattice itself: for $S_{14}$ the cyclic-extension lattice \cite{gap,hulpke2005} exhausts memory before enumerating its $75154$ classes. To reach $N = 14$ we assemble the catalogue orbit-partition by orbit-partition, a route that materializes neither the full colouring space nor the full $S_N$ lattice. The route rests on the following construction.

\begin{theorem}[Exhaustive construction of the realizable types]\label{thm:construction}
Every realizable symmetry type of an $N$-spin system is a $2^{*}$-closed subgroup of $S_N$ taken up to spin renumbering, and the following construction yields the exhaustive list: (i) run over the integer partitions $N = \lambda_1 + \dots + \lambda_k$; (ii) for each partition, over the tuples $(T_1, \dots, T_k)$ of transitive permutation groups of degrees $\lambda_1, \dots, \lambda_k$; (iii) for each tuple, over its subdirect products (Goursat) --- the subgroups $G \le T_1 \times \dots \times T_k$ projecting onto every factor surjectively; (iv) retain the $2^{*}$-closed $G$; (v) identify the results up to spin renumbering.
\end{theorem}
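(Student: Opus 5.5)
We prove two inclusions: every realizable type is produced by the construction, and everything the construction keeps is realizable. The second inclusion is immediate from Theorem~\ref{thm:realizability}. Step (iv) keeps only groups $G$ with $G = \clo{G}$, and by the criterion each such $G$ is the full automorphism group of its orbit colouring. Step (v) only passes to conjugacy classes in $S_N$. This is legitimate because conjugating by a renumbering $\pi$ carries vertex and edge orbits to vertex and edge orbits, so $\clo{(\pi G \pi^{-1})} = \pi\,\clo{G}\,\pi^{-1}$. Hence $2^{*}$-closedness, and with it realizability, is a property of the type and not of the particular labelling.

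For the first inclusion, let $H = \mathrm{Aut}(G)$ be realizable. By Theorem~\ref{thm:realizability}, $H$ is $2^{*}$-closed. Its vertex orbits $O_1, \dots, O_k$ partition $V$. After a spin renumbering, which (v) allows, we may assume the orbits are consecutive blocks of sizes $\lambda_1 \ge \dots \ge \lambda_k$; this $\lambda$ is one of the partitions visited in (i).

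Restricting $H$ to $O_i$ gives a homomorphism $\rho_i : H \to \mathrm{Sym}(O_i)$. Its image $T_i = \rho_i(H)$ is transitive of degree $\lambda_i$, because $O_i$ is an orbit. Each $\sigma \in H$ is determined by its restrictions, so $h \mapsto (\rho_1(h), \dots, \rho_k(h))$ embeds $H$ in $T_1 \times \dots \times T_k$. By definition of $T_i$, each projection of the image onto $T_i$ is surjective, so $H$ is a subdirect product. Thus $H$ is met in step (iii), provided (ii) runs over all transitive groups $T_i \le S_{\lambda_i}$ and not merely up to conjugacy.

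The main obstacle is to reconcile steps (ii) and (iii) with the practical reading in which transitive groups and subdirect products are taken only up to equivalence. I would argue as follows:
\begin{itemize}
\item Conjugating $T_i$ inside $\mathrm{Sym}(O_i)$ is a renumbering within $O_i$. This renumbering lies in $S_{\lambda_1} \times \dots \times S_{\lambda_k} \le S_N$, so it is absorbed by (v). It therefore suffices to take each $T_i$ from a list of transitive groups up to conjugacy in $S_{\lambda_i}$.
\item Goursat's lemma, iterated over the factors, parametrizes all subdirect products of $T_1 \times \dots \times T_k$. It gives them explicitly as fibre products over common quotients $T_i/N_i \cong T_j/N_j$, glued by isomorphisms. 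Enumerating all such data, or enumerating them up to the normalizer of $\prod T_i$ in $S_N$, which again amounts to renumbering, still reaches a conjugate of $H$.
\item Permuting blocks of equal size $\lambda_i = \lambda_j$ is also a renumbering. Hence unordered tuples suffice, and (v) merges any duplicates.
\end{itemize}

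We note, without needing it, that the orbits of a subdirect product $G$ are exactly the blocks $O_i$. Consequently a retained $G$ has orbit partition $\lambda$, and distinct partitions never produce the same type twice except through (v).
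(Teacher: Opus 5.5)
Your proof is correct and follows the paper's route. The vertex orbits of a $2^{*}$-closed group give transitive projections, so the group is a subdirect product met at step (iii), and soundness is just the realizability criterion; you also make explicit what the paper leaves implicit, namely that conjugation commutes with the closure and that taking components up to conjugacy is absorbed by step (v). The only loose point is your side remark on Goursat: for $k \ge 3$ the iterated gluing is between a quotient of the partial subdirect product and the next factor, not pairwise $T_i/N_i \cong T_j/N_j$ (e.g.\ the sum-zero subgroup of $C_2^3$ is subdirect yet invisible pairwise), but nothing in your argument depends on that description.
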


\begin{proof}
Completeness: the orbits of a $2^{*}$-closed group $G$ form a partition $\lambda \vdash N$; $G$ is transitive on each of its orbits, hence a subdirect product of its projections, which are transitive groups of degrees $\lambda_i$; therefore $G$ arises at step (iii) and survives steps (iv)--(v). Soundness: whatever passes step (iv) is $2^{*}$-closed by construction.
\end{proof}

The second ingredient reduces the work to the fixed-point-free part of each level:

\begin{theorem}[Tower decomposition]\label{thm:tower}
For $N \ge 4$, $a(N) = a(N-1) + f(N)$, where $f(N)$ is the number of realizable classes at level $N$ whose groups have no fixed vertex. Equivalently: every realizable group with a fixed vertex is a realizable group of $N-1$ spins extended by an isolated spin, and this embedding is injective on classes.
\end{theorem}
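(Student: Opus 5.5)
The plan is to identify the realizable classes at level $N$ that have a fixed vertex with the realizable classes at level $N-1$, using an explicit bijection $H \mapsto \hat H$. Here $\hat H \le S_N$ acts as $H$ on $\{1,\dots,N-1\}$ and fixes the new spin $N$. The classes without a fixed vertex are counted by $f(N)$ by definition, and they are disjoint from the others because having a fixed vertex is invariant under conjugation. The identity $a(N)=a(N-1)+f(N)$ then follows once I show three things: (a) $\hat H$ is $2^{*}$-closed if and only if $H$ is; (b) every realizable group with a fixed vertex is conjugate to some $\hat H$; (c) $\hat H_1$ conjugate to $\hat H_2$ in $S_N$ implies $H_1$ conjugate to $H_2$ in $S_{N-1}$. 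By Theorem~\ref{thm:realizability}, realizability is exactly $2^{*}$-closedness, so throughout I work only with $\clo{\cdot}$.

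\textbf{Step (a): closure commutes with adjoining a fixed point.} Let $G\le S_N$ fix $v$, put $V'=V\setminus\{v\}$, and let $H=G|_{V'}$.
\begin{itemize}
\item The vertex orbits of $G$ are $\{v\}$ together with the vertex orbits of $H$.
\item The edge orbits of $G$ are the edge orbits of $H$ inside $V'$, together with the sets $\{\{v,i\}: i\in O\}$ for each $H$-orbit $O$.
\end{itemize}
Any $\sigma\in\clo{G}$ preserves the orbit $\{v\}$, so it fixes $v$. Its restriction to $V'$ preserves all vertex and edge orbits of $H$, hence lies in $\clo{H}$. Conversely, take $\tau\in\clo{H}$ and extend it by $v\mapsto v$. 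It preserves the orbits inside $V'$ by assumption. It also preserves each edge orbit $\{\{v,i\}: i\in O\}$, because $\tau(i)\in O$ whenever $i\in O$. Hence $\clo{G}=\widehat{\clo{H}}$, and therefore $G=\clo{G}$ if and only if $H=\clo{H}$. Step (b) is then immediate: relabel so that $v=N$.

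\textbf{Step (c): injectivity.} This is the only step where some care is needed, because $H_i$ may have further fixed points. Suppose $\tau\hat H_1\tau^{-1}=\hat H_2$. Since $N$ is fixed by $\hat H_1$, the point $w=\tau(N)$ is fixed by $\hat H_2$.
\begin{itemize}
\item If $w=N$, then $\tau$ restricts to a conjugation of $H_1$ onto $H_2$ in $S_{N-1}$.
\item If $w\neq N$, let $\rho=(w\,N)$. Both $w$ and $N$ are fixed by every element of $\hat H_2$, so $\rho$ commutes with every element of $\hat H_2$. Then $\rho\tau$ still conjugates $\hat H_1$ onto $\hat H_2$ and sends $N$ to $N$, which reduces to the first case.
\end{itemize}

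Finally, the hypothesis $N\ge4$ is needed only for the edge-colouring convention of the footnote to Section~2. That convention, under which vertex weights are expressible through edge classes, holds for $N-1\ge3$. At $N=3$ it would compare against the anomalous value $a(2)=1$, which is why the statement starts at $N=4$.
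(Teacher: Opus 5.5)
Your proof is correct and follows the route the paper indicates: the paper gives no separate proof, only the ``equivalently'' reformulation (appending an isolated spin is a class-injective bijection onto the fixed-vertex classes), the partition analogy, and a machine check of $a(N)-a(N-1)$. Your Step (a), showing $\clo{G}=\widehat{\clo{H}}$ for $G=\hat H$, and the commuting transposition $(w\,N)$ in Step (c) supply the details it omits. Your explanation of the hypothesis $N\ge 4$ is also right: for $2^{*}$-closed classes the identity holds for all $N\ge 2$, and it fails at $N=3$ only because the tabulated $a(2)=1$ follows the edge-colouring convention ($a(3)=3\neq a(2)+f(3)=1+1$).
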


By the tower decomposition (Theorem~\ref{thm:tower}) it suffices to construct $f(N)$, the fixed-point-free classes; the remaining $a(N-1)$ classes are the level-$(N-1)$ catalogue with one isolated vertex appended. A fixed-point-free realizable group has an orbit partition $\lambda \vdash N$ with every part $\ge 2$; it lies in the Young subgroup $S_{\lambda_1} \times \dots \times S_{\lambda_k}$ and is transitive on each block. For each such $\lambda$ we enumerate exactly these block-transitive subgroups: for a single block $[N]$ they are the transitive groups of degree $N$ (a tabulated library); for a balanced two-block partition $[a,b]$ they are the subdirect products (Goursat) of a transitive group on each block, read directly from the small transitive-group libraries rather than from the lattice of $S_a \times S_b$ --- decisive for the balanced partitions, which dominate the full-lattice cost; for three or more blocks the Young subgroup is a product of small symmetric groups whose lattice is itself tractable. Every candidate is put through the closure test of Section~4.2 (form its edge orbit-colouring, compute $\mathrm{Aut}$ with the same individualization--refinement canonicalizer, keep it iff $|\mathrm{Aut}| = |G|$) and deduplicated by canonical form, which also merges the conjugates that arise from permuting equal blocks. This gives $f(14) = 1550$ and hence $a(14) = a(13) + f(14) = 2866$, of which the single-orbit part is $10$ of the $63$ transitive groups of degree $14$. The decomposition is exhaustive by Theorem~\ref{thm:construction}, specialized to the unit-free partitions: every subgroup has a well-defined orbit partition, and the block-transitive subgroups of the associated Young subgroup are precisely the groups realizing that partition.

One point about Theorem~\ref{thm:construction} deserves emphasis: the transitive components at step (ii) must range over \emph{all} transitive groups, not merely the $2^{*}$-closed ones. The projection of a $2^{*}$-closed group onto one of its orbits is transitive but in general not closed --- for the chiral entry \texttt{6/o3.1} both projections are $C_3$ on three points, whose closure is $S_3$ --- so restricting the components to closed groups would lose the entire chiral family; the closure test is applied once, to the assembled subdirect product (necessarily so, since a subdirect product of closed components may itself fail to be closed, e.g.\ $S_3 \times_{C_2} S_3$ of order 18 at $N = 6$). Equivalently, the search can be organized in two tiers: the induced colouring of an orbit determines only the \emph{closure} of the projection ($T$ and its closure share the same orbit colouring), so one first runs over all $2^{*}$-closed transitive groups per orbit and then over the transitive subgroups with that closure as the actual components, pinned down by the cross-orbit classes. In the single-orbit case the second tier collapses: with no cross-orbit classes a non-closed component cannot be pinned, the final test forces the component to equal its closure, and the enumeration reduces to the $2^{*}$-closed transitive groups themselves --- the $10$ of the $63$ transitive groups of degree $14$.

Integer partitions of $N$ are, in fact, the common skeleton of the constructive search and the decomposition. In the breadth-first search they index the conjugacy classes (cycle types) of the round-one generators; in the decomposition they index the vertex-orbit structures of the target groups. The link is exact: the $2^{*}$-closure preserves vertex orbits, and the orbits of a cyclic group are the cycles of its generator, so the orbit partition of $\mathrm{cl}\langle\sigma\rangle$ equals the cycle type of $\sigma$. Consequently the catalogue after round one is in bijection with the partitions of $N$ --- the round-one totals are the partition numbers $p(N)$: $5, 7, 11, 15, 22, 30, 42, 56, 77, 101$ entries for $N = 4$--$13$ --- and each $\mathrm{cl}\langle\sigma_\lambda\rangle$ is precisely the cyclic block-transitive seed of the stratum $\lambda$ that the decomposition completes inside its Young subgroup. The tower decomposition itself mirrors the elementary partition identity $p(N) = p(N-1) + p_{\ge 2}(N)$, with unit parts corresponding to fixed points and unit-free partitions to the fixed-point-free classes: at $N = 14$, $135 = 101 + 34$, and exactly $34$ strata enter the computation of $f(14)$.

The decomposition also furnishes strong independent validation. Applied to $N \le 13$ it reproduces $f(4), \dots, f(13)$ term for term; and for every two-block partition it yields byte-identical closed-colouring sets whether the block-transitive subgroups are taken from the full Young lattice or from subdirect products. A third, wholly separate route corroborates the lower levels: enumerating all conjugacy classes of subgroups of $S_N$ in GAP \cite{gap,hulpke2005} and filtering by $2^{*}$-closure reproduces $a(3), \dots, a(13)$ exactly --- two algorithms of opposite character, symmetrize-then-search and enumerate-then-filter, agreeing on the entire sequence. The force of this agreement lies in the disjoint dependencies of the routes: the breadth-first search is entirely self-contained --- its input is the single integer $N$, it manipulates nothing but edge colourings and permutations, and its termination certificate is internal --- whereas the other two routes lean on independent external machinery, the subgroup-lattice algorithms of GAP and the tabulated transitive-group library, respectively. The three methods therefore share no failure modes, and their term-for-term agreement cross-certifies each against the dependencies of the others.

\section{Results and Discussion}

\subsection{The Counting Sequence and the Tower Decomposition}
Table~\ref{tab:sequence} presents the complete counting results: $a(N)$, the number of realizable symmetry classes ($2^{*}$-closed subgroups of $S_N$ up to conjugacy), against $s(N)$, the total number of subgroup classes of $S_N$ (OEIS A000638 \cite{oeisA000638}). The realizable groups are a rapidly thinning minority: at $N = 12$ only 948 of 10723 subgroup classes survive the closure condition, at $N = 13$ only 1316 of 20832, and at $N = 14$ only 2866 of 75154. To the best of our knowledge the sequence $a(N)$ is new; it will be submitted to the OEIS.

\begin{table}[t]
\centering
\caption{Realizable symmetry classes $a(N)$ versus all subgroup classes $s(N) = $ A000638$(N)$, and the fixed-point-free increment $f(N)$ of the tower decomposition (Theorem~\ref{thm:tower}). The value $a(2) = 1$ follows the edge-colouring convention of Section 2.1.}
\label{tab:sequence}
\begin{tabular}{lrrrrrrrrrrrrrr}
\toprule
$N$    & 1 & 2 & 3 & 4  & 5  & 6  & 7  & 8   & 9   & 10   & 11   & 12 & 13 & 14 \\
\midrule
$a(N)$ & 1 & 1 & 3 & 8  & 11 & 27 & 36 & 90  & 131 & 282  & 394  & 948 & 1316 & 2866 \\
$s(N)$ & 1 & 2 & 4 & 11 & 19 & 56 & 96 & 296 & 554 & 1593 & 3094 & 10723 & 20832 & 75154 \\
$f(N)$ & --- & --- & --- & 5 & 3 & 16 & 9 & 54 & 41 & 151 & 112 & 554 & 368 & 1550 \\
\bottomrule
\end{tabular}
\end{table}

The values $f(4), \dots, f(14) = 5, 3, 16, 9, 54, 41, 151, 112, 554, 368, 1550$ were verified directly: the machine taxonomy marks each entry as fixed-point-free or not, and the counts reproduce the differences $a(N) - a(N-1)$ exactly. The tower is in fact the backbone of the $N = 14$ computation, which is assembled from $f(14)$ by the orbit-partition decomposition of Section~\ref{sec:decomposition}. The alternation visible in $f$ (each odd level is poorer than its even predecessor) reflects the parity constraints of Section~\ref{sec:chiral}: many symmetry motifs, including all new cyclic groups (Corollary~\ref{cor:parity}), can debut only at even $N$.

\subsection{Structure of the Taxonomy}\label{sec:taxonomy}
For every one of the 6112 catalogue entries, the accompanying machine taxonomy records the composite sizes, $|\Gamma_q|$, the vertex orbits, fixed-point-freeness, cyclicity, the maximal element order, abelianity and ambivalence. Table~\ref{tab:stats} summarizes several structural counts.

\begin{table}[t]
\centering
\caption{Structural statistics of the catalogue. ``fpf'' = fixed-point-free classes ($= f(N)$ for $N \ge 4$); ``cyclic''/``abelian'' = classes with cyclic/abelian $\mathrm{Aut}$; ``trivial twins'' = classes with no magnetic equivalence ($\prod n_g! = 1$); ``trivial $\Gamma_q$'' = classes whose symmetry is exhausted by magnetic equivalence ($|\Gamma_q| = 1$).}
\label{tab:stats}
\begin{tabular}{rrrrrrr}
\toprule
$N$ & $a(N)$ & fpf & cyclic & abelian & trivial twins & trivial $\Gamma_q$ \\
\midrule
3  & 3   & 1   & 2  & 2   & 1   & 3  \\
4  & 8   & 5   & 3  & 5   & 3   & 5  \\
5  & 11  & 3   & 3  & 5   & 4   & 7  \\
6  & 27  & 16  & 5  & 11  & 10  & 11 \\
7  & 36  & 9   & 5  & 11  & 11  & 15 \\
8  & 90  & 54  & 8  & 27  & 31  & 22 \\
9  & 131 & 41  & 10 & 29  & 42  & 30 \\
10 & 282 & 151 & 14 & 61  & 87  & 42 \\
11 & 394 & 112 & 16 & 64  & 99  & 56 \\
12 & 948 & 554 & 26 & 152 & 279 & 77 \\
13 & 1316 & 368 & 28 & 161 & 334 & 101 \\
14 & 2866 & 1550 & 40 & 348 & 736 & 135 \\
\bottomrule
\end{tabular}
\end{table}

Several trends are visible. The fraction of classes with non-trivial magnetic equivalence stays dominant but declines slowly ($\sim 74\%$ at $N = 14$), while the classes whose symmetry is \emph{entirely} magnetic (trivial $\Gamma_q$ --- the classical ``first-order'' systems such as $A_2 M_3 X$) become a small minority ($135$ of $2866$ at $N = 14$): topological symmetry, invisible to first-order multiplet analysis, is the rule rather than the exception among symmetric topologies. Cyclic (chiral) groups remain rare --- 40 classes at $N = 14$ --- consistently with the strong constraints of Section~\ref{sec:chiral}.

\subsection{Algebraic Landscape: Products, Extensions, and Wreath Motifs}
The allowed $2^{*}$-closed subgroups exhibit highly constrained algebraic structures dictated by the modularity of the spin graphs. The simplest are \textbf{direct products} (e.g., $G = H_1 \times H_2$), corresponding to disjoint or uniformly coupled subsystems (e.g., $A_m X_n$). At the next level, Eq.~\eqref{eq:factor} equips every entry with the exact sequence
\begin{equation}
    1 \longrightarrow \prod_g S_{n_g} \longrightarrow \mathrm{Aut}(G) \longrightarrow \Gamma_q \longrightarrow 1,
\end{equation}
the kernel being the magnetic (Young-type) part and the quotient the topological part. When $\Gamma_q$ permutes $k$ equal composites of size $m$, the extension specializes to the \textbf{wreath product} $S_m \wr K$: catalogue examples include $(A_2)_3 = S_2 \wr S_3$ of order 48, $(A_3)_2 = S_3 \wr S_2$ of order 72, $(A_4)_2 = S_4 \wr S_2$ of order 1152, and the polyhedral $\mathrm{cube}(A^\prime 8)$ family. In general, however, the extension need not be of wreath type, and the factor group itself ranges over dihedral, polyhedral and non-split motifs; it is precisely this variety that necessitates the general isotypic machinery of Section~7 rather than any ad hoc treatment of special cases.

This highlights the physical significance of the \textbf{factor graph}. Once all magnetically equivalent twin vertices are contracted via Schur--Weyl duality, the residual factor graph represents the pure topological skeleton of the spin system, and its automorphism group $\Gamma_q$ governs the macroscopic complexity of the energy level structure. While purely chiral $C_n$ groups are forbidden for single rings (Proposition~\ref{prop:ring}), they do appear --- at the minimal spin counts $\mu^{*}(C_n)$ of Section~\ref{sec:chiral} --- and complex factor groups with dihedral, polyhedral and non-abelian structures emerge systematically for $N \ge 6$.

\subsection{Non-Realizability of Purely Chiral Single-Ring Symmetries}

\begin{proposition}[Single ring closes to the dihedral group]\label{prop:ring}
Let $N \ge 3$ and let the coupling matrix $J$ be invariant under the cyclic rotation $c = (1\,2\,\dots\,N)$ acting on a single $N$-vertex ring with equal shifts. Then $J$ is invariant under the full dihedral group $D_N$; consequently $\clo{C_N} \supseteq D_N$ and the purely rotational group $C_N$ is not realizable on a single ring.
\end{proposition}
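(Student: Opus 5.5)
The plan is to show directly that invariance under $c$ forces $J$ to be a function of the cyclic distance alone, and then to observe that the reflection $r: i \mapsto -i \pmod N$ (vertices labelled by $\mathbb{Z}_N$) preserves cyclic distance. Label the spins by residues mod $N$, so $c(i) = i+1$. Invariance under $c$ means $J_{i+1,j+1} = J_{ij}$ for all $i \ne j$. Iterating this gives $J_{ij} = J_{0,\,j-i}$. Define $g(d) = J_{0,d}$ for $d \in \mathbb{Z}_N \setminus \{0\}$, so that $J_{ij} = g(j-i)$. This is the circulant structure.

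The key step, and the only place where the physics enters, is the symmetry $J_{ij} = J_{ji}$. It gives $g(j-i) = J_{ij} = J_{ji} = g(i-j)$, hence $g(d) = g(-d)$ for every $d$. Therefore $J_{ij}$ depends only on the unordered pair $\{d,-d\}$, that is, on the cyclic distance $\min(d, N-d)$. This is exactly the merging of the $+d$ and $-d$ classes described in Remark~\ref{rem:directed}.

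Now take the reflection $r(i) = -i$. We have $J_{r(i)r(j)} = g(-j+i) = g(i-j) = g(j-i) = J_{ij}$, so $r$ preserves all couplings. The shifts are equal by hypothesis, so $r$ preserves them too. Hence $r \in \mathrm{Aut}(G)$, and since $\langle c, r\rangle = D_N$, the full dihedral group preserves $J$.

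For the closure statement, apply this to the orbit colouring of $C_N$ itself, as in the ($\Leftarrow$) direction of Theorem~\ref{thm:realizability}. That colouring has a single vertex orbit and $c$-invariant edge classes. By the argument above, $r$ maps each edge orbit $\{\{i,i+d\}\}$ onto itself: it sends $\{i,i+d\}$ to $\{-i,-i-d\} = \{j, j+d\}$ with $j = -i-d$. So $r \in \clo{C_N}$ and $\clo{C_N} \supseteq D_N$.

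Non-realizability then follows from Theorem~\ref{thm:realizability}. For $N \ge 3$, $r \notin C_N$: $r$ fixes $0$ but is not the identity, because $-1 \ne 1$ when $N \ge 3$. Hence $C_N \ne \clo{C_N}$.

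There is no real obstacle here. The only care needed is the distinction between preserving $J$ and preserving the individual edge orbits, which is handled by the explicit orbit check above, and the small case $N = 3$, where $D_3 = S_3$ and the argument still holds.
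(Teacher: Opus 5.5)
Your proof is correct and follows essentially the same route as the paper, which identifies the edge orbits of $C_N$ with the cyclic distance classes and observes that $\tau(i) = -i$ preserves cyclic distance; your circulant computation $J_{ij} = g(j-i)$ with $g(d) = g(-d)$ is an explicit form of that step. Your checks that $r$ maps each orbit $\{\{i,i+d\}\}$ onto itself and that $r \notin C_N$ for $N \ge 3$ fill in details the paper leaves implicit.
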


\begin{proof}
The edge orbits of $C_N$ on unordered pairs are the \emph{distance classes} $\{\{i, i+d\} : i \in \mathbb{Z}_N\}$, $d = 1, \dots, \lfloor N/2 \rfloor$, determined by the cyclic distance $d(i,j) = \min(|i-j|, N - |i-j|)$. The reflection $\tau(i) = -i \pmod N$ preserves cyclic distances, hence preserves every distance class, hence $\tau \in \clo{C_N}$.
\end{proof}

Therefore, any scalar-coupled spin matrix $J$ that is invariant under the cyclic group $C_N$ acting on a single ring is mathematically forced to be invariant under the full dihedral group $D_N$: on a single ring, rotational symmetry is always accompanied by specular reflection symmetry, due to the scalar (non-directional) nature of the $J$-couplings.

\subsection{Minimal Realizations of Chiral (Cyclic) Symmetry}\label{sec:chiral}

Proposition~\ref{prop:ring} does \emph{not} forbid chiral spin systems: it only forbids single-ring realizations. The canonical chiral realization of $C_n$ is a pair of \emph{twisted $n$-gons}: two $n$-rings rotated synchronously by the generator, with the $n$ inter-ring ``skew matching'' classes labelled by the difference $d \in \mathbb{Z}_n$. The intra-ring distance classes are automatically reflection-symmetric, but a reflection must act on the inter-ring difference classes as $d \mapsto c - d$ for some constant $c$, which cannot fix all $n \ge 3$ classes; generic (pairwise distinct) inter-ring labels therefore kill every reflection, and the closure is exactly $C_n$. Geometrically this is an antiprism at a generic twist angle --- a \emph{chiral configuration} of spins (Figure~\ref{fig:embeddings}a,c).

The twisted pair, however, is not always minimal. The general construction is a \textbf{family of quotient rings}: the generator rotates by $+1$ several rings of sizes $m_t \mid n$ with $\operatorname{lcm}\{m_t\} = n$ (faithfulness). Between two rings of sizes $m_s$ and $m_t$ there are exactly $\gcd(m_s, m_t)$ skew classes, indexed by a difference in $\mathbb{Z}_{\gcd(m_s,m_t)}$; any reflection acts on them as $k \mapsto -k + c$, which cannot preserve all classes once $\gcd(m_s, m_t) \ge 3$. Conversely, a ring of size $\ge 3$ that belongs to \emph{no} pair with $\gcd \ge 3$ can be flipped independently of the rest (for example, the family $\{4,3,3\}$ fails to realize $C_{12}$: the square flips on its own). This yields a complete criterion:

\begin{proposition}[Quotient-ring criterion]\label{prop:family}
A family of quotient rings of sizes $\{m_t\}$, $m_t \mid n$, $\operatorname{lcm}\{m_t\} = n$, with generic labels realizes exactly $C_n$ if and only if
(i) every ring of size $\ge 3$ belongs to at least one pair with $\gcd(m_s, m_t) \ge 3$, and
(ii) the group of synchronized rotations $\{(r_t) : r_s \equiv r_t \bmod \gcd(m_s, m_t)\ \forall s,t\}$ has order exactly $n$.
\end{proposition}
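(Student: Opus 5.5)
The plan is to compute the symmetrized closure of the cyclic group $\langle c\rangle$ directly and compare it with $C_n$. As in the ($\Leftarrow$) part of Theorem~\ref{thm:realizability}, generic (pairwise distinct) labels on the vertex and edge orbits of $\langle c\rangle$ give $\mathrm{Aut}(G)=\clo{\langle c\rangle}$. So the proposition amounts to the statement that $\clo{\langle c\rangle}$ has order $n$ exactly when (i) and (ii) hold.

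\textbf{Setup.} Model the vertex set as $\bigsqcup_t \mathbb{Z}_{m_t}$, with $c$ acting as $x\mapsto x+1$ on every ring. Because $\operatorname{lcm}\{m_t\}=n$, $c$ has order $n$ and $\langle c\rangle\cong C_n$ acts faithfully. The orbits of $\langle c\rangle$ are as follows.
\begin{itemize}
\item The vertex orbits are the rings themselves.
\item The intra-ring edge orbits are the distance classes of Proposition~\ref{prop:ring}.
\item For $s\neq t$, the inter-ring edge orbits are the $g=\gcd(m_s,m_t)$ classes $K_k=\{\{i,j\}: i\in\mathbb{Z}_{m_s},\, j\in\mathbb{Z}_{m_t},\, j-i\equiv k \bmod g\}$.
\end{itemize}
A permutation in the closure must fix every ring setwise. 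No swapping of equal-size rings can occur, since each ring is its own vertex orbit.

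\textbf{Step 1: one ring at a time.} For $m_t\ge 3$, the distance-1 class of ring $t$ is an $m_t$-cycle. Any permutation of the ring preserving it is dihedral, $x\mapsto \varepsilon_t x+r_t$ with $\varepsilon_t=\pm1$, and conversely all such maps preserve every distance class by Proposition~\ref{prop:ring}. For $m_t\le 2$, every permutation of the ring already has this form, and there $\varepsilon_t$ is immaterial. So the closure consists of tuples $(\varepsilon_t,r_t)_t$ cut out by the inter-ring conditions.

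\textbf{Step 2: one pair of rings at a time.} This is the step I expect to carry the real content. The map must send $K_k$ to $K_k$ for every $k$. That is, whenever $j-i\equiv k \pmod g$, we need $\varepsilon_t j-\varepsilon_s i+(r_t-r_s)\equiv k \pmod g$. The cases are:
\begin{itemize}
\item $g\le 2$. Here $-1\equiv 1$, so the condition reduces to $r_s\equiv r_t \pmod g$.
\item $g\ge 3$. Varying $i$ with $k$ fixed forces $\varepsilon_s\equiv\varepsilon_t \pmod g$, hence $\varepsilon_s=\varepsilon_t$. If that common value were $-1$, we would need $-k+\mathrm{const}\equiv k$ for all $k\in\mathbb{Z}_g$; comparing $k=0$ and $k=1$ gives $2\equiv 0$, which is impossible for $g\ge3$. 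Hence $\varepsilon_s=\varepsilon_t=+1$ and $r_s\equiv r_t \pmod g$.
\end{itemize}

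\textbf{Step 3: assemble.} If (i) holds, every ring of size $\ge3$ is forced to $\varepsilon_t=+1$, and the rings of size $\le2$ need no sign. The closure is then exactly the synchronized-rotation group $R$ of (ii). Since $R\supseteq\langle c\rangle$ and $|\langle c\rangle|=n$, the closure equals $C_n$ iff $|R|=n$.

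For necessity of (i), suppose a ring $t$ with $m_t\ge3$ lies in no pair with $\gcd\ge3$. Take $\varepsilon_t=-1$, all $r=0$, and the identity on the other rings. By Step 2 this satisfies every pair condition, since all its pairs have $g\le 2$. It restricts to a reflection on an $m_t$-gon with $m_t\ge3$, which is not a rotation, so it lies outside $C_n$. Necessity of (ii), given (i), follows because the closure then equals $R$.
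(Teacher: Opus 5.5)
Your proof is correct and follows the paper's own reasoning: skew classes indexed by differences in $\mathbb{Z}_{\gcd(m_s,m_t)}$, reflections excluded once that gcd is at least $3$, and a ring of size $\ge 3$ lying in no such pair flipped on its own. Your Steps~1--3 add what the paper's sketch leaves implicit, namely the per-ring dihedral reduction, the mixed-sign case and the identification of the closure with $R$. As a side remark, the generalized Chinese remainder theorem shows that $R$ is always exactly the image of $\mathbb{Z}_n$ under $a \mapsto (a \bmod m_t)_t$, so $|R| = n$ holds automatically and condition (ii) never actually excludes a family.
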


For a prime power $n = p^e$ the criterion gives the minimal spin count
\begin{equation}\label{eq:mu}
    \mu^{*}(C_{p^e}) \;=\; p^e + \min\{\, p^f : 3 \le p^f \le p^e \,\},
    \qquad \mu^{*}(C_2) = 2,
\end{equation}
i.e. $p^e + p$ for odd $p$, $2^e + 4$ for $p = 2$, $e \ge 3$, and the twisted pair $4+4$ for $C_4$. For composite $n$ the minimum is an additive assembly of the prime-power components on disjoint supports, $\mu^{*}(C_n) = \sum_p \mu^{*}(C_{p^{e_p}})$.

Two consequences of Eq.~\eqref{eq:mu} are counter-intuitive and were, in fact, first observed in the machine-generated catalogue of the present work: $\mu^{*}(C_8) = 8 + 4 = 12$ and $\mu^{*}(C_9) = 9 + 3 = 12$, \emph{not} $16$ and $18$ as the naive two-ring formula $2 p^e$ would suggest. The $N = 12$ catalogue contains exactly one cyclic group of order 8 (entry \texttt{12/o8.44}, vertex orbits $8+4$) and exactly one of order 9 (entry \texttt{12/o9.3}, orbits $9+3$; Figure~\ref{fig:embeddings}d); both are absent for all $N \le 11$, and both were verified by an independent backtracking enumeration of the automorphisms and their element orders ($C_8$: orders $1,2,4,4,8,8,8,8$; $C_9$: orders $1,3,3,9,9,9,9,9,9$). The catalogue likewise confirms $\mu^{*}(C_6) = 8$ (orbits $2+3+3$), the quotient-ring realization $6+3$ of $C_6$ at $N = 9$, and $\mu^{*}(C_{10}) = 12$ (orbits $5+5+2$, entry \texttt{12/o10.5}). The $N = 14$ level sharpens the picture further: the cyclic groups debuting there are exactly $C_7$, $C_{12}$ and $C_{18}$, all with $\mu^{*} = 14$ --- $7+7$ for the prime $C_7$, and the additive assemblies $8+6$ ($C_4$ with $C_3$) and $2+12$ ($C_2$ with $C_9$) for the composites --- so that a cyclic group of order 7, absent for all $N \le 13$, first appears at exactly $N = 2\cdot 7$.

\begin{corollary}[Parity of chiral debuts]\label{cor:parity}
$\mu^{*}(C_n)$ is even for every $n \ge 2$. Hence new cyclic symmetry groups debut only at even spin counts: the observed first appearances of $C_2$, $C_3$, $C_4$, $C_6$, $C_5$, $C_8$, $C_9$, $C_{10}$, $C_7$, $C_{12}$, $C_{18}$ are $N = 2$, $6$, $8$, $8$, $10$, $12$, $12$, $12$, $14$, $14$, $14$, respectively.
\end{corollary}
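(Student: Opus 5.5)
The plan is to take the formula~\eqref{eq:mu} for prime powers and the additive assembly rule $\mu^{*}(C_n) = \sum_p \mu^{*}(C_{p^{e_p}})$ as given, and reduce the parity statement to a parity check on each prime-power summand. A sum of even numbers is even, so it suffices to show that $\mu^{*}(C_{p^e})$ is even for every prime power $p^e \ge 2$. The statement about debuts then follows at once. A cyclic group $C_n$ can appear as a realizable group at level $N$ only if $N \ge \mu^{*}(C_n)$. Its first appearance is therefore at $N = \mu^{*}(C_n)$, which is even. The listed debut values are read off from the catalogue and are checked against $\mu^{*}$: $2, 6, 8, 8, 10, 12, 12, 12, 14, 14, 14$, for example $\mu^{*}(C_6) = \mu^{*}(C_2) + \mu^{*}(C_3) = 2 + 6$.

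For the prime-power case I split by $p$.
\begin{itemize}
\item If $p$ is odd, then $p^f \ge 3$ already holds at $f = 1$, so $\mu^{*}(C_{p^e}) = p^e + p$. This is a sum of two odd numbers, hence even.
\item If $p = 2$ and $e = 1$, the paper sets $\mu^{*}(C_2) = 2$, which is even.
\item If $p = 2$ and $e = 2$, the minimum in \eqref{eq:mu} is $4$, giving $4 + 4 = 8$.
\item If $p = 2$ and $e \ge 3$, the smallest power of $2$ that is at least $3$ is $4$, giving $2^e + 4$, which is even.
\end{itemize}
Every summand is therefore even, and so is $\mu^{*}(C_n)$ for composite $n$.

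The one real obstacle is that the corollary is only as strong as \eqref{eq:mu} and the additivity claim, which the text states without full proof. If I had to make the argument self-contained, I would reprove the parity directly from Proposition~\ref{prop:family}, without computing exact minima. A generator of $C_{p^e}$ must contain a $p^e$-cycle, so some ring has size $p^e$. If $p^e \ge 3$, condition (i) forces a second ring of size $p^f$ with $p^f \ge 3$ to be paired with it.

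The harder direction is the lower bound on minimality. One has to show that no configuration beats the two-ring one, for instance by using orbits whose sizes are not powers of $p$ to mix primes. Here I would argue that any orbit of a cyclic group has size dividing $n$, and then project onto the Sylow components. The genuinely delicate step is additivity: showing that sharing one orbit between two prime components never saves spins. I would attempt it by observing that a ring of size $m$ contributes to the pairing condition for prime $p$ only through the $p$-part of $m$, which costs at least as much as a dedicated ring of that $p$-part. That suggests a splitting argument which does not increase $N$. Failing this, I would rely on the catalogue's exhaustive verification up to $N = 14$ for the listed debuts and state the general parity claim conditional on \eqref{eq:mu}.
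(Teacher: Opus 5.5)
Your argument is the one the paper intends: the corollary follows from Eq.~\eqref{eq:mu} together with the additive assembly rule, since every prime-power summand is even ($p^e+p$ for odd $p$, and $2$, $4+4$, $2^e+4$ for $p=2$), and the debut list is a term-by-term check against these values. One caution on your optional self-contained sketch: a composite ring can satisfy condition~(i) through a prime other than $p$ --- e.g.\ $\{12,3\}$ realizes $C_{12}$ at $N=15$, with the $2$-part $4$ needing no partner because ring $12$ is already paired through its $3$-part, while the split family $\{4,3,3\}$ violates~(i) --- so your claim that a ring contributes to the pairing for $p$ only through its $p$-part is false, and additivity needs a charging argument that pays for the missing $p$-partner out of the surplus $m-\sum_{q\mid m} q^{v_q(m)}$ of the composite ring rather than a plain splitting.
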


The corollary made a falsifiable prediction, tested against the subsequently
completed $N = 13$ and $N = 14$ levels. At the odd level $N = 13$ no new cyclic
group may appear, and none does: the cyclic groups present are exactly those
inherited from $N \le 12$ ($C_1, \dots, C_6, C_8, C_9, C_{10}$), while the level
merely debuts the dihedral $D_{13}$ (order 26, a single 13-ring) and still
contains no groups of order 7, 11 or 13. At the even level $N = 14$ the rule
instead permits debuts, and precisely three appear --- $C_7$, $C_{12}$, $C_{18}$,
all with $\mu^{*} = 14$ (Section~\ref{sec:chiral}) --- so that a cyclic group
of order 7 enters the catalogue for the first time, at exactly $N = 2\cdot 7$.

\begin{figure}[!tp]
\centering
\includegraphics[width=\textwidth]{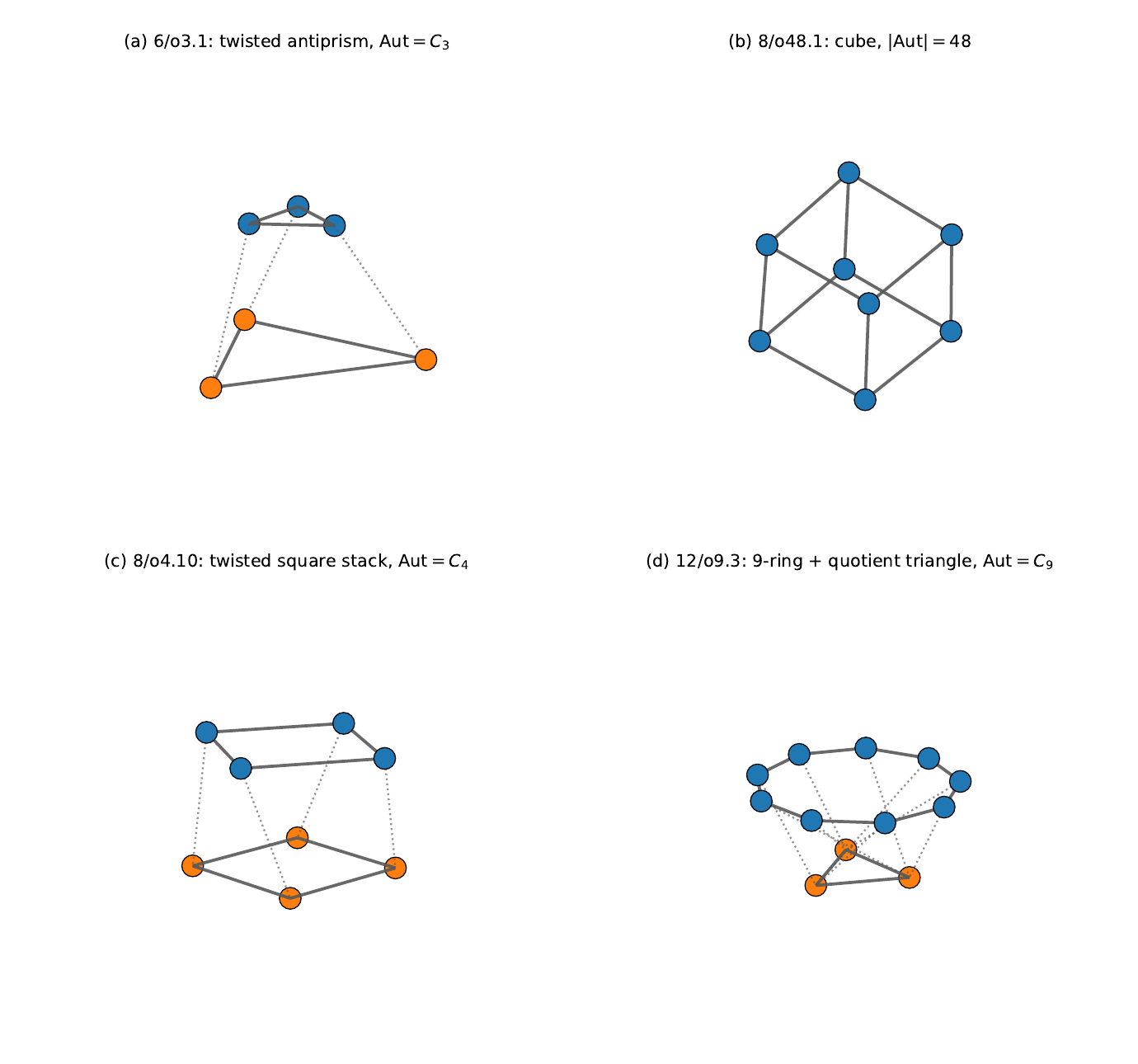}
\caption{Chiral and achiral realizable topologies. Vertex colour encodes the
vertex orbit; solid lines show the ring/cage coupling classes, dotted lines
one representative inter-orbit (``skew matching'') class; all remaining
couplings of the complete graph carry their own class labels and are omitted
for clarity. (a),(b) are symmetry-exact spectral embeddings produced by the
accompanying \texttt{topology2xyz} tool (invariant eigenspace triples of the
class-weighted topology matrix, Section~\ref{sec:embedding}); (c),(d) are the canonical twisted-stack
constructions drawn schematically. (a) The minimal chiral system: two twisted
triangles, $\mathrm{Aut} = C_3$ at $N = 6$. (b) The cube at $N = 8$,
$|\mathrm{Aut}| = 48$. (c) Twisted square stack, $\mathrm{Aut} = C_4$ at
$N = 8$. (d) The realization $C_9$ at $N = 12$ (orbits $9+3$), which refutes
the naive $2p^e$ bound; note that this action is not a rigid three-dimensional
rotation group --- the generator advances the 9-ring by one step but the
triangle by $120^\circ$ --- so the depicted twist is schematic.}
\label{fig:embeddings}
\end{figure}

\subsection{Alternating Components and the Revival of \texorpdfstring{$A_4$}{A4}}\label{sec:alternating}
The two-tier reading of Theorem~\ref{thm:construction} (Section~\ref{sec:decomposition}) sorts the transitive components by their closures and asks which \emph{non-closed} components are actually pinned down in the assembled catalogue. For the cyclic components the answer is the chiral family of Section~\ref{sec:chiral}. The natural alternating groups are the next candidates: $A_n$ is transitive on unordered pairs, its orbit colouring of $K_n$ is monochrome, so $\clo{A_n} = S_n$ --- can a realizable group nevertheless project onto one of its orbits as $A_n$?

A direct scan of all 6112 entries settles the question. For every vertex orbit of size $n \ge 3$ free of magnetic twins, the restriction of $\mathrm{Aut}$ to the orbit was enumerated explicitly; a subgroup of order $n!/2$ is necessarily $A_n$, the unique index-2 subgroup of $S_n$, so no odd permutation can occur in it --- which also disqualifies twin-carrying orbits outright, since a twin swap restricts to a transposition. Projections equal to $A_3 = C_3$ are plentiful: 561 orbit instances, the chiral family again, beginning with the twisted triangles of entry \texttt{6/o3.1}. Projections equal to $A_n$ with $n \ge 5$ never occur. And $A_4$ revives exactly twice: entries \texttt{13/o12.10} (vertex orbits $4+6+3$) and \texttt{14/o12.32} (the same with an isolated spin appended), both with $\mathrm{Aut} \cong A_4$ of order 12 acting faithfully.

The mechanism is the chiral twist of Section~\ref{sec:chiral} lifted to a quotient group. All three orbits of \texttt{13/o12.10} are built from a single 4-element set: its four points, its six unordered pairs, and its three partitions into two disjoint pairs. The point--pair coupling is the incidence relation (two classes), which locks these two orbits together; the point--partition coupling is forced monochrome ($A_4$ is transitive on point--partition incidences); and the entire parity selection resides in the pair--partition coupling, which carries \emph{three} classes --- ``pair inside its partition'' plus two \emph{distinguished} transversal classes. Under $S_4$ the transversal incidences form a single class: the transposition $(ab)$ fixes the partition $ab|cd$ but exchanges $ac|bd$ with $ad|bc$, and with them the two classes. Distinguishing them therefore eliminates every odd permutation: $S_4$ acts on the partition triple through $S_4/V_4 \cong S_3$, the colouring pins that image down to $C_3$, and the preimage of $C_3$ in $S_4$ is exactly $A_4$. The trick is available to $A_4$ alone among the alternating groups: for $n \ge 5$ the group $A_n$ is simple, so there is no small quotient to pin, and the $A_n$-orbits on couplings between point- and pair-type orbits coincide with the $S_n$-orbits --- a transposition of two points outside any given incidence configuration exists whenever $n \ge 5$, while $n = 4$ is precisely the tight case in which the configuration exhausts the points and the class splits (larger $A_n$-actions, on $k$-subsets or coset spaces, already exceed the $N \le 14$ budget). Each orbit of \texttt{13/o12.10} taken separately is non-closed --- $\clo{A_4}$ on the points is $S_4$, $\clo{C_3} = S_3$ on the partitions, and on the pairs $\clo{A_4} = C_2 \wr S_3$ of order 48, the octahedral colouring of $K_6$ --- yet all three components come alive in the ensemble: the sharpest illustration in the catalogue of the remark following Theorem~\ref{thm:construction}. Incidentally, the \emph{regular} action of $A_4$ is closed already at $N = 12$ (the single-orbit entry \texttt{12/o12.7}; its catalogue neighbour \texttt{12/o12.9} is the regular $D_6$), but a natural 4-point alternating orbit debuts only at $N = 13$. Since $A_4$ is non-ambivalent --- its 3-cycles split into two mutually inverse conjugacy classes --- both entries carry genuinely complex character tables and exercise the complex-character branch of the L2b machinery of Section~7.

\subsection{Odd Order, the Paley Tournament, and the Fano Plane}\label{sec:zoo}
The question of the previous subsection generalizes: decompose every entry into its Goursat components --- for each vertex orbit of $\mathrm{Aut}$, the projection $T_i$ (the transitive component actually pinned by the gluing) and its closure $\clo{T_i}$ (the deck group of the two-tier scheme of Section~\ref{sec:decomposition}) --- and ask which non-closed components are ever realized. A sweep of all 6112 entries gives a closed answer: phantom projections occur in roughly a thousand orbit instances, but in only 18 species $(n, |T|, |\clo{T}|)$. Besides the chiral cyclic family $C_3$--$C_9$ (561 instances of $C_3$ alone) and $A_4$, the list contains their ``thickened'' versions carried by magnetic multiplets --- entry \texttt{12/o648.1} realizes the wreath phantom $S_3 \wr C_3$ inside $S_3 \wr S_3$, a twisted triangle whose vertices are magnetic triples --- the abelian $C_4 \times C_2$ and $C_3 \times C_3$ (entry \texttt{12/o9.2}: the regular $C_3 \times C_3$, a \emph{chiral $3 \times 3$ torus} whose closure, the generalized dihedral group of order 18, adds the lattice inversion $x \mapsto -x$), the group $C_3 \times S_3$, the rotational $S_4$ (entry \texttt{10/o24.8}: a 6-vertex octahedral orbit cut from its full octahedral closure of order 48 down to the 24 rotations by a tetrahedral companion orbit --- the spin-system analogue of $O_h \to O$), and two degree-7 jewels at $N = 14$.

Entry \texttt{14/o21.1} realizes $F_{21} = C_7 \rtimes C_3$ of order 21 --- the first non-abelian group of odd order in the catalogue. Both 7-orbits are internally monochrome, and necessarily so: $F_{21} \le S_7$ has $\clo{F_{21}} = S_7$, since its multiplier subgroup $\langle 2 \rangle \le \mathbb{Z}_7^{\times}$ fuses all three distance classes $\{\pm d\}$. The entire structure resides in the 49 cross-couplings: a perfect matching (7 edges) plus two 21-edge classes formed by the quadratic and non-quadratic residue differences --- the symmetrized \emph{Paley tournament} on 7 vertices (Figure~\ref{fig:paleyfano}a). A tournament orientation cannot be written inside one orbit, where unordered pairs identify $d$ with $-d$; between two orbits it can, because $\{i, j'\}$ and $\{j, i'\}$ are distinct edges of $K_{14}$. Containing no involution, $F_{21}$ is non-ambivalent, and its character table is intrinsically complex (Section~7).

\begin{figure}[!tp]
\centering
\includegraphics[width=\textwidth]{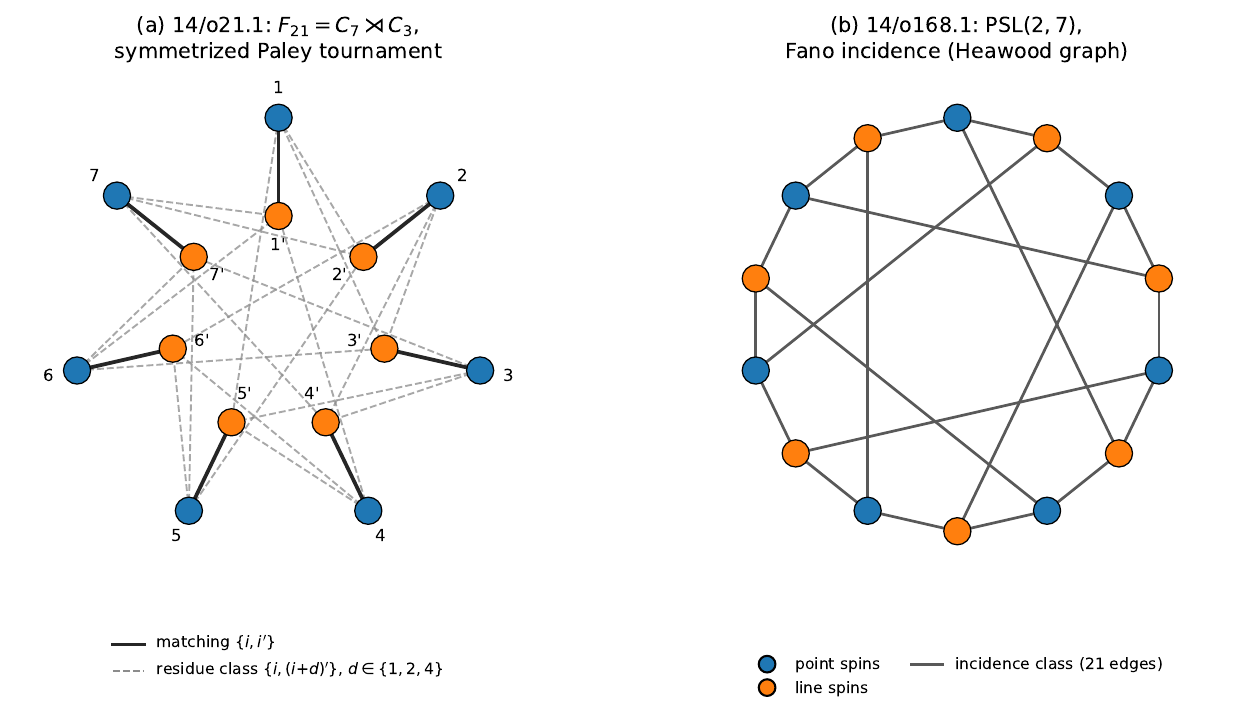}
\caption{The two degree-7 jewels of the catalogue, drawn in their canonical
difference models. (a) Entry \texttt{14/o21.1}, $\mathrm{Aut} = F_{21} =
C_7 \rtimes C_3$ of order 21: two internally monochrome 7-orbits (blue and
orange); the 49 cross-couplings split into the perfect matching
$\{i, i^{\prime}\}$ (solid), the residue class $\{i, (i{+}d)^{\prime}\}$ with
$d \in \{1,2,4\}$ (dashed), and the complementary class $d \in \{3,5,6\}$
(omitted for clarity) --- the symmetrized Paley tournament. Every reflection
and every deck swap maps the residue class onto its complement, so
distinguishing the two classes leaves exactly the chiral Frobenius group.
(b) Entry \texttt{14/o168.1}, $\mathrm{Aut} = \mathrm{PSL}(2,7)$ of order
168: point spins and line spins of the Fano plane; the single 21-edge
incidence class forms the Heawood graph, drawn in its standard LCF
$[5,-5]^7$ form (all remaining couplings are constant within their classes
and omitted).}
\label{fig:paleyfano}
\end{figure}

Entry \texttt{14/o168.1} realizes the simple group $\mathrm{PSL}(2,7) \cong \mathrm{GL}(3,2)$ of order 168: seven ``point'' spins and seven ``line'' spins, both orbits monochrome, the cross-block two-coloured by the incidence relation of the Fano plane $\mathrm{PG}(2,2)$ --- the bipartite Heawood graph (Figure~\ref{fig:paleyfano}b). Both projections are the natural degree-7 action, a phantom of index 30 in its closure $S_7$. The mechanism sharpens the diagonal gluing: the subdirect product is twisted by the \emph{outer} automorphism (point--line duality), and the cross-orbit classes are non-trivial precisely because a point stabilizer ($S_4$, of index 7) is intransitive on the lines (orbits $3 + 4$). For the alternating groups this loophole is unavailable --- their outer twist is realized inside $S_n$ --- which is why the alternating series stops at $A_4$ (Section~\ref{sec:alternating}). The sister entry \texttt{14/o336.1} fuses the two sides into a single 14-vertex orbit: the duality is restored and the group doubles to $\mathrm{PGL}(2,7)$, one of the ten closed transitive groups of degree 14.

The pattern behind the list is uniform: a non-closed transitive component revives if and only if it can be pinned either through a small quotient (the cyclic family; $C_3 = S_4/V_4$ for $A_4$) or through an outer-twisted gluing with a genuine incidence geometry (the Fano plane for $\mathrm{PSL}(2,7)$). The same criteria predict the next revivals beyond the present catalogue boundary: $F_{39} = C_{13} \rtimes C_3$, the index-2 phantom inside the Paley group $F_{78}$ of degree 13, should first revive at $N = 16$ ($13 + 3$, its $C_6$ quotient pinned down to $C_3$ by a chiral triangle), and $\mathrm{PSL}(3,3)$ --- a non-closed transitive group of degree 13 with closure $S_{13}$ --- at $N = 26$, as the point--line incidence system of $\mathrm{PG}(2,3)$.

\section{Symmetry-Exact Visualization:\texorpdfstring{\\}{ }Diagonalizing the Topology Matrix}\label{sec:embedding}

\subsection{The Class-Weighted Topology Matrix}
Besides the $2^N \times 2^N$ Hamiltonian of Section~7, a spin system carries a second, far smaller symmetric matrix: the $N \times N$ \emph{topology matrix} $W$, obtained by assigning a generic weight $w_c$ --- pairwise distinct, fixed once --- to each coupling class $c$ and setting $W_{ij} = w_{c(ij)}$ (for the catalogue entries all vertices share one shift class, so the diagonal may be taken zero; distinct shift classes would enter as generic diagonal weights). Two properties make $W$ the bridge between the combinatorial catalogue and geometry. First, every $\sigma \in \mathrm{Aut}(G)$ preserves the classes, so its permutation matrix $P_\sigma$ commutes with $W$; conversely, because the weights are pairwise distinct, any vertex permutation commuting with $W$ preserves every class. Hence $\mathrm{Aut}(W) = \mathrm{Aut}(G)$ exactly: the spectrum of $W$ sees the true symmetry of the entry and nothing more. Second, by the spectral theorem the eigenspaces of $W$ are consequently $\mathrm{Aut}$-invariant, and the $N$-dimensional permutation representation of $\mathrm{Aut}$ decomposes orthogonally across them. Diagonalizing the topology matrix is the vertex-space shadow of the same commutation principle that block-diagonalizes the Hamiltonian in Section~7 --- one group, two matrices, two scales ($N$ versus $2^N$).

The geometric meaning of this diagonalization is fixed by the following observation, which makes the intuition ``a spin system is a distorted regular simplex'' literal.

\begin{proposition}[Euclidean realization]\label{prop:euclid}
The realizable symmetry groups of $N$-spin systems are exactly the isometry groups of full-dimensional $N$-point configurations in $\mathbb{R}^{N-1}$, viewed as permutation groups of the points.
\end{proposition}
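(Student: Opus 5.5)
We prove the two inclusions separately, using Theorem~\ref{thm:realizability} to replace ``realizable'' by ``$2^{*}$-closed''. Throughout, for points $x_1,\dots,x_N \in \mathbb{R}^{N-1}$ affinely spanning the space, we let $\mathrm{Iso}(x)$ denote the group of permutations $\sigma \in S_N$ for which some isometry $g$ of $\mathbb{R}^{N-1}$ satisfies $g(x_i) = x_{\sigma(i)}$ for all $i$.

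\textbf{Isometry groups are realizable.} Given a full-dimensional configuration, set $J_{ij} = \|x_i - x_j\|$ and take all shifts equal; this is a weighted $K_N$. We claim $\mathrm{Aut}(G) = \mathrm{Iso}(x)$. The inclusion $\mathrm{Iso}(x) \subseteq \mathrm{Aut}(G)$ holds because isometries preserve distances. For the converse, let $\sigma$ preserve all pairwise distances. Then the map $x_i \mapsto x_{\sigma(i)}$ extends to an isometry of $\mathbb{R}^{N-1}$. To see this, translate so that $x_1$ is the origin; the Gram matrix of $x_i - x_1$ is determined by the distances via polarization. Since the $x_i - x_1$ for $i \ge 2$ form a basis, the linear map sending each $x_i - x_1$ to $x_{\sigma(i)} - x_{\sigma(1)}$ preserves the Gram matrix and is therefore orthogonal. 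Hence $\mathrm{Iso}(x)$ is the automorphism group of a weighted graph, and so it is realizable. (Full-dimensionality is used only to make the extension unique and to ensure that such configurations exist; the extension itself holds in general.)

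\textbf{Realizable groups are isometry groups.} Let $H = \clo{H}$. Take its orbit colouring from the proof of Theorem~\ref{thm:realizability}, assigning the coupling values generically. We want these values to be the pairwise distances of a simplex. Choose $J_{ij} = 1 + \varepsilon\, w_{c(ij)}$, with distinct small $w_c$ indexed by the edge orbits $c$. For $\varepsilon$ small the distance matrix is a perturbation of the regular simplex. The regular simplex's Cayley--Menger/Gram matrix $\tfrac12(I + \mathbf{1}\mathbf{1}^{T})$ is positive definite in the $N-1$ relevant coordinates, and positive definiteness is an open condition. So for small $\varepsilon$ the squared distances $J_{ij}^2$ are realized by a full-dimensional configuration in $\mathbb{R}^{N-1}$ (Schoenberg's criterion). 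By the first part, its isometry group equals the automorphism group of the distance colouring. The distances are pairwise distinct across orbits exactly when the $w_c$ are distinct, so this is the orbit colouring, and its automorphism group is $\clo{H} = H$.

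\textbf{Shift classes.} The vertex-orbit (shift) classes need separate care. For $N \ge 3$ the footnote to Section~2.1 says edge classes alone can induce any vertex partition, but the orbit colouring of $H$ alone may have a larger automorphism group than $\clo{H}$ if vertex orbits are not forced by edges. The fix is to make the vertex orbits visible in the distances: add to $J_{ij}$ a small term $\delta(a_{o(i)} + a_{o(j)})$, with distinct generic $a_o$ per vertex orbit $o$. This refines the edge colouring compatibly with $H$, since it is constant on $H$-edge-orbits. It also forces any distance-preserving permutation to respect vertex orbits, because the vertex orbit can be recovered from the multiset of distances at a vertex under generic choices. Verifying this genericity claim is the main obstacle. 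A cleaner route avoids it: any permutation preserving each edge orbit of $H$ (for $N \ge 3$) automatically preserves vertex orbits once each edge orbit is given its own distance, since an edge orbit determines the unordered pair of vertex orbits of its endpoints, and for $N \ge 3$ one can argue this pins vertex orbits. I would try this route first and fall back on the perturbation if it fails. The case $N = 2$ is handled by the edge-colouring convention.
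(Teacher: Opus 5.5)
Your two directions match the paper's proof: perturb the regular simplex along the edge orbits of $H$, using openness of positive definiteness (Schoenberg); and extend distance-preserving permutations of a full-dimensional configuration to isometries via the Gram matrix. The gap is the step your second direction needs in its last line. There you must know that the edge-only orbit colouring, with equal shifts, has automorphism group $\clo{H}$. That is, for $N\ge 3$, every permutation preserving each $H$-orbit on unordered pairs must also preserve each vertex orbit. You flag this correctly; the paper uses it tacitly, via the footnote to Section~2.1. But you leave it unproved.

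Your fallback cannot supply it. As long as distinct edge orbits keep distinct distances, adding $\delta(a_{o(i)}+a_{o(j)})$ leaves the equal-distance partition equal to the edge-orbit partition. The group of distance-preserving permutations depends only on that partition. So the perturbation cannot change the group, and the ``genericity claim'' you call the main obstacle is the combinatorial lemma itself.

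The lemma has a short proof. Let $\sigma$ preserve every edge orbit of $H$.
\begin{itemize}
\item Let $O$ be a vertex orbit with $|O|\ge 2$. The $H$-orbit of a pair inside $O$ consists of pairs inside $O$, and by transitivity it covers all of $O$. Since $\sigma$ maps that edge orbit onto itself, it maps its covered vertex set onto itself, so $\sigma(O)=O$.
\item Hence $\sigma$ preserves the set $F$ of $H$-fixed points. For $u\in F$, pick distinct $w_1,w_2\neq u$, which is possible since $N\ge 3$. The orbit of $\{u,w_k\}$ is $\{\{u,w'\}: w'\in O_{w_k}\}$, covering $\{u\}\cup O_{w_k}$. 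Therefore $\sigma(u)\in F\cap(\{u\}\cup O_{w_1})\cap(\{u\}\cup O_{w_2})$.
\item This set is $\{u\}$. If $O_{w_1}\neq O_{w_2}$, the last two sets meet only in $u$. If they coincide, that orbit contains $w_1\neq w_2$, so it is not a singleton and misses $F$.
\end{itemize}
With this lemma your second direction closes with equal shifts, exactly as in the paper, and the shift-class paragraph can be dropped. The lemma fails precisely at $N=2$. There the trivial group is realizable only through distinct shifts, while two points on a line always admit the swap; so the edge-colouring convention you invoke is genuinely needed there, not cosmetic.
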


\begin{proof}
The squared-distance matrix of the regular simplex is an interior point of the cone of Euclidean distance matrices: its centred Gramian is proportional to $I - J/N$, positive definite on $\mathbf{1}^{\perp}$. Every sufficiently small perturbation of the squared distances therefore remains the distance matrix of a configuration of full affine rank $N-1$. Perturbing along the edge classes of a $2^{*}$-closed group $H$ --- one small generic increment per class --- yields a configuration whose distance-preserving point permutations are precisely the automorphisms of the colouring, i.e.\ $H$. Conversely, for any full-dimensional configuration a distance-preserving permutation of the points extends uniquely to an isometry of $\mathbb{R}^{N-1}$, and the isometry group of the configuration is the automorphism group of its distance colouring --- a realizable group by the criterion of Section 2. (Physical couplings need not satisfy the metric axioms, but only the combinatorics of the equality classes matters, and the Euclidean model reproduces it in full.)
\end{proof}

Chirality is included: a twisted antiprism is a perfectly legal point configuration whose isometry group is the purely rotational $C_3$ --- the Euclidean incarnation of entry \texttt{6/o3.1}.

\subsection{Spectral Embedding in Three Dimensions}
The accompanying tool \texttt{topology2xyz} turns this into pictures. Let $U$ be an orthonormal eigenbasis of $W$ and select a union $S$ of \emph{whole} eigenspaces with total dimension 3; the rows of $U_S$ are the vertex coordinates. For every $\sigma \in \mathrm{Aut}$ the invariance of $S$ gives $P_\sigma U_S = U_S Q_\sigma$ with $Q_\sigma$ orthogonal: every graph automorphism acts on the drawing as an exact rigid motion, and since the Gram matrix $U_S^{\vphantom{T}} U_S^{T}$ is the invariant projector onto $S$, the configuration is canonical up to a global rotation. Among all eigenspace selections of total dimension 3 the tool picks the one maximizing the minimal inter-vertex distance, then exports the result as a pseudo-molecule (\texttt{.xyz}/\texttt{.mol2}: element = vertex orbit, bonds = the sparse coupling classes), directly viewable in standard chemistry viewers. Exactness is verified a posteriori by an orthogonal-Procrustes check --- the nearest orthogonal $Q_\sigma$ is recovered by an SVD and the residual $\|P_\sigma X - X Q_\sigma\|$ is at machine precision. Panels (a) and (b) of Figure~\ref{fig:embeddings} --- the twisted triangles with $\mathrm{Aut} = C_3$ and the cube with $|\mathrm{Aut}| = 48$ --- are such embeddings.

Two caveats are intrinsic rather than numerical, and the tool reports both. First, the 3D action is always isometric but need not be \emph{faithful}: distinct automorphisms may land on the same isometry --- e.g.\ the group $D_2 \times D_2$ acting on two tetrads, of order 16, exceeds the largest elementary abelian 2-subgroup of $\mathrm{O}(3)$ and cannot act faithfully in three dimensions; the tool counts the distinct isometries realized. Second, some automorphism groups are not three-dimensional point groups \emph{on the given vertex set} at all: no 3-dimensional invariant subspace separates the vertices. The tool then finds the smallest separating symmetric embedding in dimension $D > 3$ and folds the surplus coordinates into the third axis with small deterministic coefficients --- all vertices separate, and the drawing remains exact for the subgroup fixing the folded directions. The $C_9$ realization at $N = 12$ (orbits $9+3$, Figure~\ref{fig:embeddings}d) is of this kind: the generator advances the 9-ring by one step while turning the triangle by $120^{\circ}$, an action no rigid rotation of $\mathbb{R}^{3}$ can perform --- which is exactly why panel (d) is drawn schematically. In this sense the catalogue is an atlas of shapes, not merely a list of groups: every entry exports as a molecule-like object whose rigid symmetries are exactly its realizable group, and the failures of three-dimensional faithfulness are themselves informative, flagging the entries whose symmetry is intrinsically higher-dimensional.

\section{Hierarchical Methodology for Exact Block Diagonalization}

\subsection{Layer 0 (L0): Conservation of Total Spin Projection}
Before applying any graph-theoretical symmetry, the first stage of factorization relies on the fundamental physics of the isotropic spin Hamiltonian. The Hamiltonian $H$ strictly commutes with the total spin projection operator along the quantization axis, $F_z = \sum_{i=1}^N I_{z,i}$. This universal property ensures that the $2^N$-dimensional Hilbert space $\mathcal{H}$ naturally partitions into $N+1$ orthogonal subspaces (blocks), each characterized by a distinct magnetic quantum number $m \in \{-N/2, \dots, N/2\}$. While this L0 reduction is standard in all NMR simulation routines, it serves as the necessary foundation for the subsequent group-theoretical projections.

\subsection{Layer 1 (L1): Magnetic Equivalence and Schur--Weyl Duality}
The first non-trivial phase of symmetry reduction addresses the twin vertices of the weighted graph --- the strictly magnetically equivalent nuclei of Section 3.3. Rather than explicitly constructing symmetrized linear combinations in the full Hilbert space, we leverage Schur--Weyl duality \cite{weyl1939}. Each magnetic composite of $n_g$ spins carries a local symmetric group $S_{n_g}$, and its state space factorizes into irreducible representations of $S_{n_g}$, which map directly to the total-spin values of the composite particle \cite{banwell1963}. The Schur--Weyl viewpoint on nuclear-spin statistics has seen a systematic revival in molecular spectroscopy: restricting the $S_N$-decomposition to the permutation subgroup realized by rovibrational motion yields the nuclear-spin statistical weights of rigid molecules, combinatorially via the major-index branching rule for $S_N \downarrow C_m$ \cite{kubischta2026}. In the liquid-state setting of the present work the relevant subgroup is instead the full automorphism group of the coupling graph --- which, by Proposition~\ref{prop:ring}, is never purely cyclic on a single ring.

By analytically contracting these magnetically equivalent nuclei into effective composite particles, the Hamiltonian is projected onto a composite basis. This operation achieves a substantial reduction in matrix dimensionality before any global topological analysis is required. Integrating this L1 reduction directly into the computational core of quantum-mechanical simulators circumvents the generation of redundant states, dramatically accelerating the evaluation of transition matrix elements.

\subsection{Layer 2 (L2): Configuration Orbits and Isotypic Projections of the Factor Group}
Following the L1 contraction, the original complete graph $G$ is reduced to a smaller factor graph, whose residual symmetry is the factor group $\Gamma_q$ of Eq.~\eqref{eq:factor}. The factor-group reduction proceeds in two stages of increasing delicacy.

\emph{L2a: configuration orbits and orbit weight.} A configuration $c = (S_1, \dots, S_k)$ fixes the total spin of every composite particle; since each $S_g$ is an exact quantum number, $H$ is block-diagonal in $c$. The factor group permutes the particles and hence the configurations, and for $\sigma \in \Gamma_q$ the blocks $H_c$ and $H_{\sigma \cdot c}$ are related by a unitary permutation, so their spectra and $F^{+}$ matrix elements coincide. It therefore suffices to diagonalize a \emph{single representative per $\Gamma_q$-orbit} and weight its lineshape contribution by the orbit size (times the Schur--Weyl multiplicity). This reduction is exact for any factor group --- abelian or not --- and requires no symmetry-adapted basis at all.

\emph{L2b: residual symmetry of fixed configurations.} Configurations with a non-trivial stabilizer $\mathrm{Stab}(c) \le \Gamma_q$ retain genuine residual symmetry acting \emph{inside} the block, where the flip-flop terms couple different $m$-tuples at fixed total projection; here a symmetry-adapted reduction is unavoidable. We apply an isotypic projection of the L1-basis states over the irreducible representations of $\mathrm{Stab}(c)$, with a unified treatment of abelian and non-abelian stabilizers, including those possessing complex characters. Wigner projection operators are constructed systematically from the character tables; the projection acts at the level of isotypic components and does not fold the multiplicity of multidimensional irreducible representations, so no degenerate-partner alignment ambiguity arises. This guarantees an exact segregation of the Hamiltonian into independent blocks, mathematically ensuring the absence of off-diagonal elements between states belonging to different irreducible representations. For stabilizers whose characters are all real (integer $\pm 1$; e.g.\ $C_2$, $D_2$ and their products) the same basis is obtained more cheaply by a real simultaneous diagonalization of the commuting permutation operators, without any character table.

The character tables themselves are computed numerically from the group alone. The class-multiplication matrices span the centre of the group algebra, commute, and are simultaneously diagonalizable; their common eigenvectors are in bijection with the irreducible representations and carry the central characters, from which the full table is recovered. The joint eigenbasis is found by \emph{sequential} eigenspace splitting, extracting each eigenspace as an SVD null space --- a deterministic procedure, uniformly stable for complex-character cyclic groups, for non-abelian groups, and for the case that defeats the alternative single-random-combination approach: large abelian factor groups, where the number of classes equals the group order.

The taxonomy of Section~\ref{sec:taxonomy} records, for every catalogue entry, whether the automorphism group is \emph{ambivalent} (every element conjugate to its inverse, equivalently all characters real). For non-ambivalent groups the complex-conjugate pairs of one-dimensional representations produce pairs of blocks with identical spectra --- for a real symmetric Hamiltonian only one block of each pair needs to be diagonalized.

\subsection{Worked Example}
Table~\ref{tab:blocks} illustrates the full L0--L2 reduction for the catalogue entry \texttt{8/o6.5} --- the chiral 8-spin system $C_3\langle A_3^\prime M_3^\prime \rangle \ast X_2$ (two twisted triangles plus one magnetic pair; $|\mathrm{Aut}| = 6$, twins $2$, $|\Gamma_q| = 3$). L1 contracts the pair into a composite with total spin $j \in \{1, 0\}$; since the rotated particles are singletons, every configuration is $\Gamma_q$-fixed, and the reduction is the L2b projection onto the three characters $\chi_0, \chi_1, \chi_2$ of $\mathrm{Stab} = \Gamma_q = C_3$, where $\chi_1, \chi_2$ are complex conjugates. The largest diagonalization block shrinks from $\binom{8}{4} = 70$ (L0 alone) to $18$; moreover, the $\chi_1$ and $\chi_2$ blocks are spectrally identical, so only one of each pair is computed.

\begin{table}[t]
\centering
\caption{Block dimensions for entry \texttt{8/o6.5} ($2^8 = 256$ states) after the L0 ($F_z$), L1 (composite pair, total spin $j$) and L2 ($C_3$ characters $\chi_k$) reductions. The $\chi_1$ and $\chi_2$ blocks are complex-conjugate and have identical spectra.}
\label{tab:blocks}
\begin{tabular}{c|ccc|ccc}
\toprule
 & \multicolumn{3}{c|}{$j = 1$} & \multicolumn{3}{c}{$j = 0$} \\
$m$ & $\chi_0$ & $\chi_1$ & $\chi_2$ & $\chi_0$ & $\chi_1$ & $\chi_2$ \\
\midrule
$\pm 4$ & 1 & 0 & 0 & 0 & 0 & 0 \\
$\pm 3$ & 3 & 2 & 2 & 1 & 0 & 0 \\
$\pm 2$ & 8 & 7 & 7 & 2 & 2 & 2 \\
$\pm 1$ & 15 & 13 & 13 & 5 & 5 & 5 \\
$0$ & 18 & 16 & 16 & 8 & 6 & 6 \\
\bottomrule
\end{tabular}
\end{table}

The resulting hierarchical reduction (L0 $\rightarrow$ L1 $\rightarrow$ L2) is mathematically exact, invokes no physical approximations, and enables the computation of the total NMR lineshape for complex, highly symmetric multi-spin topologies with maximal efficiency.

\subsection{Exactness and Validation}
The pipeline factorizes into independently checkable stages --- basis symmetrization, block representation, diagonalization, and intensity assembly --- separated by two invariants. First, the multiset of all Hamiltonian eigenvalues, taken with the orbit, Schur--Weyl and irrep multiplicities, is a similarity invariant and must reproduce the spectrum of the dense $2^N$ reference engine exactly; this validates every stage up to diagonalization, independently of eigenvector ambiguities in degenerate subspaces. Second, the total lineshape must match the reference; the spectrum alone is necessary but not sufficient, since a defect in the $F^{+}$ matrix elements can hide behind perfectly correct eigenvalues, and the pair of invariants localizes any discrepancy to a single stage.

The catalogue of Section 4 doubles as the exhaustive test set. On all 90 realizable 8-spin topologies the reduced engine reproduces both the spectrum (to ${\sim}10^{-13}$) and the lineshape of the dense reference for every entry --- including the complex-character ($C_3$, $C_4$, $C_6$) and non-abelian groups that no single real simultaneous diagonalization can handle, which is precisely why case completeness matters. The group-theoretic layer scales further: on all 2866 realizable 14-spin classes the engine reproduces the order identity of Eq.~\eqref{eq:factor} for every class, with the pruned automorphism search matching the reference particle-permutation group exactly on every reference-tractable class while remaining polynomial where naive generate-and-test would require up to $14! \approx 8.7 \cdot 10^{10}$ permutations.

\subsection{Computational Advantages of Hierarchical Reduction}
The three-layer reduction methodology (L0--L2) provides a complete algorithmic solution for the exact block diagonalization of the isotropic spin Hamiltonian. Unlike approximate simulation methods, this approach introduces no physical simplifications, preserving the exact eigenvalues and transition matrix elements. The contraction of magnetically equivalent nuclei using Schur--Weyl duality (L1), combined with the orbit-weight deduplication of configurations and the isotypic projection over the irreducible representations of the factor group (L2a/L2b), partitions the Hamiltonian into the minimal symmetry-dictated blocks (Table~\ref{tab:blocks}), eliminates the redundancy associated with off-diagonal elements between states of different symmetry, and resolves the difficulties traditionally associated with non-abelian groups and complex characters; for non-ambivalent groups, conjugate representation pairs halve the remaining work.

\section{Conclusion and Outlook}
In this work we have given an exact realizability criterion for spin-system symmetries: the symmetry groups of scalar-coupled spin systems are precisely the symmetrized-2-closed permutation groups. On a single ring, chirality is impossible --- rotational symmetry always drags in a reflection; yet chiral spin systems exist, their minimal sizes are governed by the quotient-ring mechanism of Eq.~\eqref{eq:mu}, and the catalogue realizations $C_8$ and $C_9$ at $N = 12$ show that the true minima can undercut naive expectations. The complete taxonomy up to $N = 14$ --- $a(14) = 2866$ essentially distinct realizable classes, $\sum_{N=3}^{14} a(N) = 6112$ catalogue entries in all --- with its canonical identifiers, structural formulas and machine fingerprints, provides an exhaustive reference database for both theoretical and experimental magnetic resonance. Reaching $N = 14$ already required the orbit-partition decomposition of Section~\ref{sec:decomposition}, the direct enumeration having outgrown the subgroup lattice of $S_N$; the realizability criterion itself is size-independent, so the boundary of the catalogue is now computational rather than conceptual.

The primary avenue for future development is the direct integration of this hierarchical reduction architecture --- the automated Wigner projection and composite particle mapping routines --- into high-performance total lineshape analysis software such as ANATOLIA \cite{cheshkov2018} and its X-factorization extension \cite{nichugovskiy2026}. Merging this rigorous graph-theoretical framework with optimized parallel code will enable the routine analysis of total lineshapes for unprecedentedly large and complex spin systems, opening new frontiers in structural chemical analysis.

\section*{Data and Code Availability}
{\sloppy
The complete catalogues of realizable topologies for $N = 3$--$14$, the machine taxonomy (\texttt{catalog\_taxonomy.tsv}, 6112 entries), the cross-referenced and renamed classical 8-spin catalogue, the enumeration codes (single-node and distributed MPI versions, together with the GAP orbit-partition decomposition used for $N = 14$), the taxonomy classifier and the symmetric 3D-embedding tool (\texttt{topology2xyz}, Section~\ref{sec:embedding}) are available in the SpinSystemsSymmetry folder of the ANATOLIA repository on GitHub \cite{anatolia}.\par}

\end{document}